\documentclass[amsart,11pt,oneside,english]{article}
\usepackage{amssymb,amsmath,amsfonts, amsmath,mathtools,
amsthm,euscript,mathrsfs,MnSymbol,verbatim,enumerate,multirow,bbding,slashed,multicol,color,array, esint,babel, tikz,tikz-cd,tikz-3dplot,tkz-graph,pgfplots,ytableau,graphicx,float,rotating,hyperref,geometry,mathdots,savesym,cite, wasysym,amscd,graphicx,pifont,float,setspace,multirow,wrapfig,picture,subfigure, amsthm, enumitem}
\usepackage[normalem]{ulem} 
\usepackage[utf8]{inputenc}
\usepackage{prettyref}
\usepackage[T1]{fontenc}
\geometry{verbose,tmargin=2.6cm,bmargin=2.6cm,lmargin=2.5cm,rmargin=2.5cm,footskip=1cm}
\usepackage{datetime}
\usepackage{thm-restate}

\numberwithin{equation}{section}
\usetikzlibrary{arrows,positioning,decorations.pathmorphing,   decorations.markings, matrix, patterns}
\hypersetup{colorlinks=true}
\hypersetup{linkcolor=black}
\hypersetup{citecolor=black}
\hypersetup{urlcolor=black}
\theoremstyle{plain}
\newtheorem*{thm*}{Theorem}
\theoremstyle{plain}%
\newtheorem{thm}{Theorem}[section]

\newtheorem{cor}[thm]{Corrollary}

\theoremstyle{definition}
\newtheorem{defn}[thm]{Definition}

\newtheorem{rem}[thm]{Remark}

%

\numberwithin{equation}{section}

\tikzset{
  big arrow/.style={
    decoration={markings,mark=at position 1 with {\arrow[scale=1.5,#1]{>}}},
    postaction={decorate},
    shorten >=0.4pt},
  big arrow/.default=black}

\makeatother

\setcounter{section}{0}

\begin{document}

\begin{titlepage}
\begin{center}
\vspace{2cm}
{\Huge\bfseries The suspended pinch point\\
  and SU($2$)$\times$U($1$) gauge theories  \\  }
\vspace{2cm}
{\Large
Mboyo Esole$^{\diamondsuit}$ and Sabrina Pasterski$^\clubsuit$\\}
\vspace{.6cm}
{\large $^{\diamondsuit}$ Department of Mathematics, Northeastern University}\par
{\large  360 Huntington Avenue, Boston, MA 02115, USA}\par
\vspace{.3cm}
{{\large $^\clubsuit$ Jefferson Laboratory,  Harvard University}\par
{ 17 Oxford Street, Cambridge, MA 02138, USA}\par}
\vspace{2cm}
{ \bf{Abstract}}\\
\end{center}
We show that the suspended pinch point can be seen as an elliptically fibered variety with singular fibers of type I$_2$ over codimension-one points of the base and a torsionless Mordell--Weil group of rank one. In the F-theory algorithm, this corresponds to a Lie group 
$SU(2)\times U(1)$. We also identify the matter content as given by the  direct sum of the adjoint representation (with zero U($1$)-charge) and the fundamental representation with U($1$)-charge $\pm 1$.
We then study the geometry of an SU($2$)$\times$U($1$)-model given by a compact elliptically fibered variety with the singularities of a suspended pinch point. 
We describe in detail the  crepant resolutions and the network of flops  of this geometry.
We compute topological invariants including the Euler characteristic and Hodge numbers.
We also study the weak coupling limit of this geometry and show that it corresponds to an orientifold theory with an Sp($1$)-stack transverse to the orientifold and two  brane-image-branes wrapping the orientifold.

\vfill 

\end{titlepage}

\pagestyle{empty} 
\tableofcontents

\newpage 
\pagestyle{plain}
\setcounter{page}{1}

\section{Introduction}

Elliptic fibrations are elegant algebraic varieties whose geometries can surprisingly encode essential information about the structure of gauge theories \cite{Bershadsky:1996nh,Morrison:1996pp, Anderson:2017zfm}.   Insights from the physics of gauge theories and string dualities  shed light on different aspects of the geometry of elliptic fibrations, creating opportunities for new results in both mathematics and physics.

In mathematics, elliptic fibrations naturally appear in the classification of algebraic surfaces, which was the motivation for Kodaira's seminal work on elliptic surfaces \cite{Kodaira}. They are also central in number theory, where the Riemann Roch theorem plus base change gives a Weierstrass model for any curve of genus one with a rational point. The singular fibers of a Weierstrass model were  classified by  N\'eron \cite{Neron}, and  Tate's algorithm gives simple rules for determining the type of a singular fiber by analyzing the valuations of its coefficients, its $j$-invariant, and its discriminant  \cite{Tate.Alg,Papadopoulos}. 
Elliptic fibrations  also appear naturally in the classification of higher dimensional varieties as the output of Mori's minimal model program since they correspond to a special case of {\em Calabi-Yau fibered spaces} with  one-dimensional fibers \cite{Kawamata.97} and 
any minimal model satisfying the abundance theorem yields a Calabi-Yau fibered space \cite{Kawamata.97}. 
There are examples of (singular) elliptic fibrations which are Mori dream spaces \cite{Garbagnati,Artie}.
Elliptic fibrations are also instrumental in the study of Calabi-Yau varieties and mirror symmetry. 
Several surveys indicate that  a large majority of known Calabi-Yau threefolds are elliptically fibered. 
Besides, they have rich a structure of flops that can be discussed in a context more general than toric geometry.

Over the years,  elliptic fibrations have become precision tools in the art of geometrically engineering gauge theories in M-theory and F-theory compactifications (for a review of F-theory, see \cite{Denef.F,Weigand:2018rez,Park}).  As a testimony to their ubiquitous influence on our current understanding of quantum field theory, elliptic fibrations provide an organizational tool for the classification of six-dimensional superconformal field theories \cite{6DCFTs}. In five-dimensional gauge theories obtained by compactification of M-theory on elliptically fibered Calabi-Yau threefolds, distinct minimal models can be identified with distinct chambers of the Coulomb phases of an associated five-dimensional gauge theory obtained by a compactification of M-theory on the Calabi-Yau threefold \cite{IMS, Witten}.  Crepant birational transformations between elliptic fibrations  model physical phase transitions between different Coulomb phases of the gauge theory \cite{IMS,Witten}. 
Certain degenerations of elliptic fibrations yield weak coupling limits of F-theory \cite{Sen:1997gv,CDE,AE1,AE2,Esole:2012tf}  that have provided unexpected mathematical relations inspired by the cancellation of tadpoles \cite{AE1,AE2,EFY,EKY1,CDE,Clingher:2012rg} and connection to K-theory \cite{CDE}.  The study of the Coulomb phases of five-dimensional gauge theories \cite{IMS,Witten}  indicates that  the geography of flops of an elliptic fibration can be modeled by a hyperplane arrangement \cite{ESY1,ESY2,EJJN1, EJJN2}. 
In this context, there  has been real progress recently toward the classification of flops associated to gauge theories of small ranks both for simple groups  \cite{ESY1,ESY2,G2,F4,SO,E7,ES}, and semi-simple groups  \cite{SU2SU3,SO4,SU2G2,EKY2}. 
Their Euler characteristic and other characteristic numbers are also well understood \cite{Euler,Char1, Char2}.

The {\em Mordell--Weil group} of an elliptic fibration is the finitely generated Abelian group of its rational sections.   The  {\em  Mordell--Weil rank} of an elliptic fibration is the rank of its Mordelll--Weil group.
The Mordell--Weil group of an elliptic fibration is a birational invariant. 
 A non-trivial Mordell--Weil rank  has a dramatic impact on the singularities of the Weierstrass model of the elliptic fibration.  One reason for this is that they are not $\mathbb{Q}$-factorial in contrast to Weierstrass models with Mordell--Weil group of rank zero. 
A  Weierstrass model over a smooth base is $\mathbb{Q}$-factorial if and only if its Mordell--Weil rank is zero \cite[Theorem 8.3]{Jagadeesan}. 
An elliptic fibration with zero Mordell--Weil rank is said to be {\em extremal}. 
In the F-theory algorithm, the rank of the Abelian sector of the reductive Lie group associated with an elliptic fibration is given by its  Mordell--Weil rank.
The role of the Mordell--Weil group in F-theory is studied in \cite{Morrison:2012ei,Mayrhofer:2014opa}. We refer to  \cite{Weigand:2018rez,Cvetic:2018bni} for a pedagogical review.

The study of flops between crepant resolutions of Weierstrass models with zero Mordell--Weil rank but  non-trivial  Mordell--Weil torsion was started in \cite{SO,EKY2}. The study of   flops of $G$-models with a nonzero Mordell--Weil rank is still in its infancy and  in need of  much more attention. 
 The simplest case to consider would be an elliptic fibration with a torsionless Mordell--Weil group of rank one  where the  associated Lie algebra $\mathfrak{g}$ is   of  type A$_1$. Such a model is a $G$-model and
 has minimal positive rank both for the Abelian and the semi-simple sector as its corresponds to 
  $$
G=\text{ SU($2$)$\times$U($1$)}.
$$

\subsection{An  \text{ SU($2$)$\times$U($1$)}-model  hiding in plain sight}
The  aim of this paper is to discuss the geometry of an elliptic fibration that is hiding in plain sight 
 and  corresponds to a simple   \text{ SU($2$)$\times$U($1$)}-model. 
The geometry of interest is  the  suspended pinch point defined by the following binomial equation in $\mathbb{C}^4$ \cite{Morrison:1998cs}:
\begin{equation}\label{eq:xiSPP}
Z_0: \quad x_0 x_1 -x_2 x_3^2=0.
\end{equation}
Upon performing the following  linear change of variables, 
\begin{equation}
(x_0,x_1, x_2, x_3)\to (y-s, y+s,x+t,x)
\end{equation}
 the suspended pinch point  $Z_0$ reveals itself as the following Weierstrass model 
\begin{equation}\label{eq:SSPWM}
Y_0:\quad y^2=x^3+tx^2+s^2,
\end{equation}
with discriminant and $j$-invariant
\begin{equation}
\Delta= s^2(4t^3+27 s^2), \quad \quad j=-\frac{2^8t^6}{s^2(4t^3+27 s^2)}.
\end{equation}
The elliptic fibration $Y_0$ defined by the Weierstrass model in equation \eqref{eq:SSPWM} is a special case of an   SU($2$)$\times$U($1$)-model.
We call it an {\em SPP elliptic fibration} as it has the singularity of a suspended pinch point. 
This is an elliptic fibration with a discriminant locus containing an irreducible component  -- the smooth divisor $S=V(s)$ corresponding to the vanishing locus $s=0$ -- whose generic point has a singular fiber with dual graph $\widetilde{\text{A}}_1$.   The  Mordell--Weil group has rank one and is generated by the regular sections 
\begin{equation}
\Sigma^\pm: \quad x=y\pm s=0,
\end{equation}
that are opposite of each other with respect to the Modell--Weil group law.

When considering the Weierstrass model of equation \eqref{eq:SSPWM} as a new model for the SU($2$)$\times$U($1$)-model,  
the base $B$ could be of arbitrary dimension. If we denote by $\mathscr{L}$ the fundamental line bundle of the Weierstrass model, the discriminant locus is a section of $\mathscr{L}^{\otimes 12}$, thus 
the divisor $S=V(s)$ supporting the fiber I$_2^{\text{ns}}$ is necessarily a section of $\mathscr{L}^{\otimes 3}$ and the divisor  $T=V(t)$ is a zero section of  $\mathscr{L}^{\otimes 2}$.  In the Calabi-Yau case, $c_1(\mathscr{L})=-K$ where $K$ is the canonical class of the base of the fibration. 
In particular,  if the base is a surface, $S$ is  a curve of genus  
\begin{equation}
g(S)=1+6K^2.
\end{equation}
 In particular, as for the SO($6$), SO($5$), and SO($3$)-models \cite{SO},  the divisor $S$ can never be a rational curve. Thus this model will always have matter in the adjoint representation in a compactification of M-theory to a five-dimensional gauge theory or  F-theory to  a six-dimensional gauge theory. 

\subsection{Minimal models, hyperplane arrangement, and flops}

A compactification of M-theory on a Calalabi--Yau threefold  gives a five-dimensional supergravity theory  with eight supercharges (${\cal N}=1$) \cite{Cadavid:1995bk,Ferrara:1996wv}.  In the Coulomb regime, there may be multiple phases separated by hyperplanes on which massive hypermultiplets become massless \cite{IMS,Witten}. The chamber structure of the Coulomb phases is due to the presence of absolute values that appear in the one-loop quantum correction to the prepotential that controls the dynamics of gauge fields in an ${\cal N}=1$ 5D gauge theory \cite{IMS}. It is then natural to introduce an explicit hyperplane arrangement \cite{ESY1,ESY2,EJJN1,EJJN2} to study the structure of the Coulomb phases.  

\begin{defn}[{The hyperplane arrangement  I($\mathfrak{g},\mathbf{R})$}]\label{Def:IgR}
Let $\mathfrak{g}$ be a semi-simple Lie algebra and $\mathbf{R}$ a representation of $\mathfrak{g}$. 
The hyperplane arrangement  I($\mathfrak{g},\mathbf{R})$  is defined inside the dual fundamental Weyl chamber of the Lie algebra  $\mathfrak{g}$. Its walls are the kernels of the weights of the representation $\mathbf{R}$. 
\end{defn}
 
\noindent The chambers  of  I$(\mathfrak{g}, \mathbf{R})$  correspond to the Coulomb phases of  an ${\cal N}=1$ 5D gauge theory with gauge algebra $\mathfrak{g}$ and matter transforming in the representation $\mathbf{R}$ of $\mathfrak{g}$ \cite{IMS,ESY1,ESY2,Hayashi:2014kca}.

Crepant resolutions of a Weierstrass model are minimal models over the Weierstrass model. 
We construct explicitly  the  minimal models over the Weierstrass model of equation \eqref{eq:SSPWM} by giving explicit crepant resolutions. 
In F-theory, the duality with five-dimensional gauge theories with eight supercharges provides a one-to-one correspondence between the Coulomb phases of a gauge theory with gauge algebra $\mathfrak{g}$ and hypermultiplets transforming in representation $\mathbf{R}$ of $\mathfrak{g}$ and the minimal models of a Weierstrass model corresponding to a  $G$-model with an associated representation $\mathbf{R}$. 
We show that the  representation associated to the SU($2$)$\times$U($1$)-model considered in this paper is\footnote{
We recall that any irreducible representation $\mathbf{r} $ of a product $H_1\times H_2$ of two compact Lie groups $H_i$ ($i=1,2$)  is the tensor product $\bf{r}=\bf{r_1}\otimes\bf{r_2}$ where $\bf{r_i}$ is an irreducible representation of $H_i$ ($i=1,2$). 
We denote a representation 
of $\bf{n}\otimes\bf{Y}$ of 
of SU($2$)$\times$U($1$) as 
$\bf{ n_{Y}}$,   where $\bf{n}$ is a representation of SU($2$) and $\bf{Y}$ is a representation of  U($1$). 
The adjoint of SU($2$) is denoted $\bf{3}$ and its fundamental representation is $\bf{2}$. 
We also note that $\bf{2}_{-m}$ should be thought as the CPT conjugate of $\bf{2}_{m}$ since the fundamental of SU($2$) is a pseudo-real representation. 
In the electroweak theory, $\bf{Y}$ is the hypercharge. For example, left-handed leptons  transform in the representation the $\bf{2_{-1}}$. 
} 
\begin{equation}
 \mathbf{R}=\bf{3}_0\oplus \bf{2}_{-1},
\end{equation}
where $\bf{3}_0$ is the adjoint representation of  SU($2$) with zero U($1$)-charge, and $\bf{2}_{-1}$ is the fundamental representation of SU($2$) with U($1$)-charge  $-1/2$. 
The U($1$)-charge can be normalized to $-1$.

This SU($2$)$\times$U($1$)-model provides a unique opportunity to  have a hard check, in the case of a Mordell-Weil group with  a nonzero rank,  on the conjecture that the chamber structure of the  hyperplane arrangement  I($\mathfrak{g},\mathbf{R}$) gives the geography of the extended K\"ahler-cone of a $G$-model with Lie algebra $\mathfrak{g}$ and associated representation $\mathbf{R}$. 
Indeed, we can leverage the description of the SU($2$)$\times$U($1$)-model  as a suspended pinch point  to fully control the structure of its crepant resolutions and compare it with the hyperplane arrangement I($\mathfrak{su}_2\oplus\mathfrak{u}_1,\mathbf{R}$).

One might think that the chamber structure of an SU($2$)$\times$U($1$)-model is given by the four chambers of the hyperplane arrangement 
  I($\mathfrak{u}_2, \bigwedge^2\oplus V$)=I($\mathfrak{u}_2, \bf{3}\oplus\bf{2})$ \cite{EJJN1}. However, that is not what is supported by the geometry.  Since the  suspended pinch point famously has three crepant resolutions connected by flops, we can anticipate that the same is true for the  SU($2$)$\times$U($1$)-model. We will show it explicitly via resolution of singularities and computing geometric weights of rational curves appearing over codimension-two points. 
The walls separating the chambers are given by the geometric weights of the singular fibers and the U($1$)-charge is computed also by intersection theory, using the Shioda-Tate-Wazir map \cite{Wazir}. 
All the hyperplane arrangements I($\mathfrak{su}_2\oplus\mathfrak{u}_1, \bf{3}_0\oplus \bf{2}_{m}$)  with $\bf{m}\neq 0$ have identical geographies: they have three chambers whose incidence graph is a Dynkin diagram of type A$_3$. The degenerate case $\bf{m}=0$ has a unique chamber.

\begin{figure}

\begin{tikzpicture}[scale=.6]
						\draw [fill=cyan!40!, draw=none, rotate=-135] (6,0) arc (0:180:6cm);;

\draw[ultra thick] (-4,0) to (4,0); 
\draw[ultra thick] (0,-4) to (0,4); 
\draw[ultra thick,rotate=-45] (0,0) to (0,4); 
\draw[ultra thick,rotate=-45] (0,0) to (4,0); 
\draw[ultra thick,rotate=-45] (0,0) to (0,-4); 
\draw[ultra thick,rotate=45] (0,0) to (0,4); 
\node[rotate=45] at  (41:4.5) {$x_1-x_2=0$};
\node at  (6:4.8) {$x_2=0$};
\node at  (-45:4.5) {$x_1+x_2=0$};
\node at  (-90:4.5) {$x_1=0$};

\node at (16,0){

\begin{tikzpicture}[scale=.6]
	\draw [fill=cyan!40!, draw=none, rotate=-90] (6.5,.05) arc (0:180:6.5cm);;
\draw[ultra thick, rotate=45] (0,0) to (0,5); 
\draw[ultra thick] (0,-5) to (0,5); 
\draw[ultra thick,rotate=-45] (0,0) to (0,5); 
\draw[ultra thick,rotate=-45] (0,0) to (5,0); 
\draw[ultra thick,rotate=-45] (0,0) to (0,-5); 
\node at  (51:5) {$\phi_1-\mu=0$};
\node at  (-48:5.1) {$-\phi_1-\mu=0$};
\node at  (-80:5.5) {$\phi_1=0$};
\end{tikzpicture}};

\end{tikzpicture}
\caption{Hyperplane arrangements 
I($\mathfrak{u}_2, \bigwedge^2 \oplus V$)  (on the left) and 
I($\mathfrak{su}_2\oplus \mathbf{u}_1, \bf{2}_{-1}$)  (on the right).
}

\end{figure}
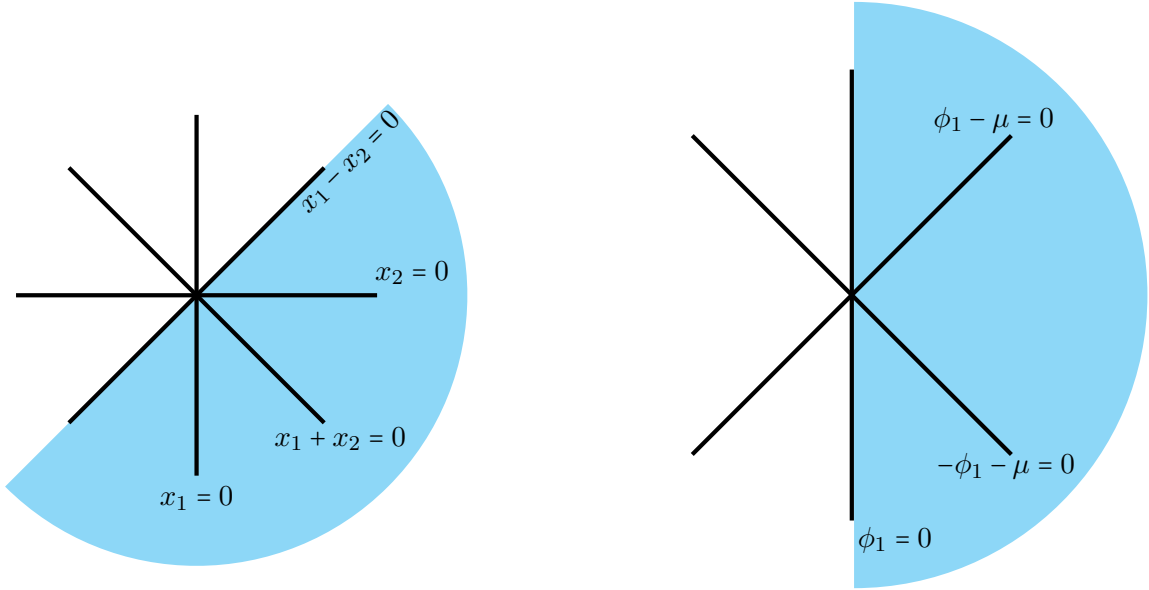

In Section \ref{Sec:Char}, we compute the Euler characteristic of an SU($2$)$\times$ U($1$)-model $Y$. 
In the case of a Calabi-Yau threefold, we also compute the Hodge numbers. 
These  data can be used to analyze anomaly cancellations in the six-dimensional theory obtained by a compactification of F-theory on a threefold  $Y$ or to compute the D3  tadpole on a fourfold $Y$.

\subsection{Weak coupling limit and tadpole matching condition}
The $j$-invariant has the following Fourrier $q$-expansion ($q=\exp(2\pi i \tau)$) with a simple pole at the cusp:  
\begin{equation}
j= \frac{1}{q}+744+196884 q +21493760 q^2 + \cdots
\end{equation}
where $\tau$ is the half-period of the elliptic curve. 
In type IIB string theory, $\tau$ is the axio-dilaton field \cite{Sen:1997gv}
\begin{equation}
\tau=C_0 + i e^{-\phi}, 
\end{equation}
 where $C_0$ is the type IIB axion and $\phi$ is the dilaton. The expectation value of $e^{\phi}$ is the string coupling $g_s$. 
When the string coupling becomes small, the $j$-invariant becomes infinite. This is called the {\em weak coupling limit} \cite{Sen:1997gv,CDE,AE1,AE2,Esole:2012tf,Clingher:2012rg}. 
It is interesting to consider the weak coupling limit of the  elliptic fibration introduced in equation \eqref{eq:SSPWM}. 
As explained in Section \ref{Sec:Weak}, we can define a weak coupling limit of the Weierstrass model of \eqref{eq:SSPWM} as follows: 
$$
s\to \epsilon s, \quad t\to h.
$$ 
In that limit, we define the double cover  $\rho: X\to B$ of the base $B$
\begin{equation}
X: \xi^2-h=0,
\end{equation}
where $\xi$ is a section of $\mathscr{L}$ and the equation $\xi^2-h=0$ is written in the total space of the line bundle $\mathscr{L}^{\otimes 2}$. 
In the weak coupling limit, we get the following brane spectrum: 
\begin{itemize}
\item an orientifold  wrapping the divisor $O: \xi=0$
\item two brane-image-brane wrapping  the divisor $O: \xi=0$
\item two invariant branes wrapping a smooth divisor $S=V(s)$ transverse to the orientifold. 
\end{itemize}
We show that this configuration satisfies the matching conditions between the D3 charge as computed in M-theory and in type IIB in absence of fluxes. 
\begin{equation}
2\chi(Y)= 4\chi(O)+2\chi(O)+2\chi(\overline{S}).
\end{equation}
We also show that this is a byproduct of a general identity true at the level of the Chow group: 
\begin{equation}
\varphi_* c(Y)= 3 \rho_*(O)+\rho_* (\overline{S}),
\end{equation}
where $\varphi: Y\to B$ is the elliptic fibration and $\rho: X\to B$ is the double cover of the base branched at $h=0$. 
That is particularly remarkable as only a handful of configurations are known to satisfy the tadpole matching condition without fluxes and its version at the level of Chow groups  \cite{CDE,AE1,AE2,EFY,EKY1}. 

\section{Preliminaries}

\subsection{$G$-models}
The F-theory algorithm provides a set of rules to attach to an elliptic fibration a triple $(G,\bf{R}, \mathfrak{g})$ where $G$ is a reductive Lie group, $\bf{R}$ is a representation of $G$, and $\mathfrak{g}$ is the Lie algebra of $G$. 
 The Lie algebra $\mathfrak{g}$  depends on the type of the dual graphs of the singular fibers over the generic points of the discriminant locus of the elliptic fibration. The weights of the representation $\mathbf{R}$ are  computed geometrically via intersection between rational curves that compose singular fibers over codimension-two points, and fibral divisors\footnote{Given an elliptic fibration $f:Y\to B$, a fibral divisor is a prime divisor $D$ such that there is a component $\Delta_i$ of the discriminant locus of $f$ such that $D$ is a prime component of $f^*\Delta_i$.}.

   An elliptic fibration associated via the F-theory dictionary to a gauge group $G$ is called a $G$-model. In a $G$-model with Lie algebra $\mathfrak{g}$ and a representation $\bf{R}$,   the pair $(\mathfrak{g},\bf{R})$ defines a hyperplane arrangement  I($\mathfrak{g},\bf{R})$ that controls several aspects of the birational geometry of the elliptic fibration such as the structure of its flops.
The incidence graph of the chambers of the hyperplane arrangement  I($\mathfrak{g},\bf{R})$ is conjectured to match the geography of Coulomb phases of a five-dimensional supergravity theory with eight supercharges (${\cal N}=1$) and hypermultiplets transforming in the representation $\mathbf{R}$ of the gauge group $G$. 

$G$-models are generally studied from a singular Weierstrass model satisfying the conditions of  Tate's algorithm so as to have the appropriate fibers over generic points of the discriminant locus and produce the dual graph of the Dynkin diagram of the Lie algebra of the  Langland dual of $G$. 
The smooth geometry is derived by a crepant resolution of singularities. 
Crepant resolutions are not unique when the elliptic fibration has dimension three or higher, however.  Distinct crepant resolutions are connected by a sequence of  flops. 
The structure defined by this network of flops can be predicted by studying the chamber structure of the hyperplane arrangement  I($\mathfrak{g},\bf{R})$ defined within the dual fundamental Weyl chamber of $\mathfrak{g}$ with interior walls given by the kernel of the weights of the representation $\mathbf{R}$.

Once the geometry is well-understood, it is possible to study in detail the cancellation of anomalies in an F-theory compactification on a $G$-model, and compute the prepotential in the Coulomb regime of the five-dimensional theory obtained by a compactification of M-theory on the same $G$-model. Several important  $G$-models are studied along these lines. 
The SU($2$), SU($3$), and SU($4$), models are treated in \cite{ESY1}.  The SU($5$)-model is discussed in \cite{EY,ESY2}.
The Spin($7$), Spin($8$), and G$_2$-models are analyzed in \cite{G2}, the  F$_4$-model in \cite{F4},  the E$_7$-model in \cite{E7}. 
The SO($3$), SO($5$), and SO($6$)-models have been studied recently in \cite{SO}.
The semi-simple cases of rank $2$ or $3$ with two simple components have also been analyzed recently:
the $\text{Spin($4$)}$ and SO($4$)-model is analyzed in \cite{SO4}, the $\text{SU($2$)$\times$G$_2$}$ in \cite{SU2G2}, the $\text{SU($2$)$\times$Sp($4$)}$ in \cite{Esole:2017hlw}, and the $\text{SU($2$)$\times$SU($3$)}$-model in \cite{SU2SU3}.

All of the models listed above confirm that the hyperplane arrangement I($\mathfrak{g},\bf{R})$ correctly predicts the geography of the flops of a $G$-model with Lie algebra $\mathfrak{g}$ and representation $\bf{R}$. However, they all have a Mordell--Weil group of rank zero. The case of a Mordell--Weil group of nonzero rank has yet to be explicitly considered.

\subsection{The group  SU($2$)$\times$U($1$)}

The group SU($2$)$\times$U($1$)  is the group of smallest rank among  compact, connected, reductive Lie groups with both a non-Abelian and an Abelian sector.  It is also one of the most celebrated in theoretical physics as it is the gauge group of the  electroweak sector of the Standard Model of particle physics.  In  the study of supersymmetric gauge theories in five-dimensional space-time with eight supercharges, the group SU($2$)$\times$U($1$) appears in the exceptional series of field theories with $N_f$ quark flavors and strongly coupled non-trivial fixed points studied by Morrison and Seiberg \cite{Morrison:1996xf}:
$$
E_8, \  E_7, \  E_6, \ E_5=\text{Spin($10$)}, \  E_4=\text{SU($5$)}, \  E_3=\text{ SU($2$)$\times$SU($3$)}, 
\ E_2=\text{ SU($2$)$\times$U($1$)}, \ E_1=\text{ SU($2$)}.
$$
We emphasize that the group SU($2$)$\times$U($1$) is not globally isomorphic to the group U($2$). 
There are strong geometric and physical arguments that U($2$) is the more natural group for the electroweak theory.
The group U($2$) is a $\mathbb{Z}_2$ quotient of SU($2$)$\times$U($1$), 
where $\mathbb{Z}_2$ is generated by minus the identity element of SU($2$)$\times$U($1$). 
Both groups share the same underlying Lie algebra $\mathfrak{su}_2\oplus\mathfrak{u}_1$:
$$
U(2)\cong \text{(SU($2$)$\times$U($1$))}/ \mathbb{Z}_2,\quad \mathfrak{u}_2\cong \mathfrak{su}_2\oplus\mathfrak{u}_1, \quad \mathbb{Z}_2\cong \{-\mathbb{I}, \mathbb{I}\}.
$$
Not all  representations of SU($2$)$\times$U($1$) project to   well-defined representations of U($2$). 
Continuous representations of U($2$) are the same as continuous representations of SU($2$)$\times$U($1$)  for which 
	$-\mathbb{I}$ is mapped to the identity.

Matter in representations $\bf{R}=\bf{3}_0\oplus \bf{2}_{\pm m}$ of   SU($2$)$\times$U($1$)  can  be obtained as follows.  
The branching rules for the decomposition of the adjoint representations of SU($3$) and USp($4$) along SU($2$)$\times$U($1$) are \cite{IMS}:
\begin{equation}
\bf{8}\to \bf{3}_0\oplus \bf{2}_{\pm{2}}\oplus \bf{1}_{0}, \quad \bf{10}\to \bf{3}_0\oplus \bf{2}_{\pm 1}\oplus \bf{1}_{\pm 2} \oplus\bf{1}_0.
\end{equation}

\section{The suspended pinch point as an  SU($2$)$\times$U($1$)-model}
In this section, we explain in some detail how the suspended pinch point gives an SU($2$)$\times$U($1$)-model.
We motivate our analysis by first considering a Whitney umbrella as a fibration of nodal curves. 
The SPP elliptic fibration is then derived by a suspension. Finally, we study the resulting geometry as a Weierstrass model whose Mordell--Weil group, discriminant, and $j$-invariant provide enough information to determine uniquely the associated group as an SU($2$)$\times$U($1$)-model.

 \subsection{The Whitney umbrella as a fibration of nodal cubics}

The Whitney umbrella is a rational surface, singular along a line of double points that worsen to a pinch point at the origin. Its defining equation is $k[x_0,x_1,x_2]/(x_0 x_1-x_2 x_3^2)$:
\begin{equation}
V(x_0^2 -x_1 x_2^2).
\end{equation}
It is a classical result that any smooth, complex, projective surface  is birational to a surface in $\mathbb{P}^3$ with only  ordinary singularities: a curve of double points, pinch points and triple points \cite[pages 616--618]{GriHa}. 
As a binomial variety with singularities in codimension-one, it is an example of a non-normal toric surface. 
The  Whitney umbrella appears naturally in F-theory:  Sen's weak coupling limit of F-theory is an orientifold theory with a seven-brane wrapping a complex surface with the singularities of a Whitney umbrella.

Since the equation of a Whitney umbrella is cubic, it is tempting to interpret  it as a fibration of curves of arithmetic genus one. 
It is natural to interpret a Whitney umbrella as an elliptic fibration: a Whitney umbrella is the surface swept by a nodal curve moving along a line. Then the singular fiber of this fibration is a cuspidal cubic at the origin. 
Under the linear transformation, $(x_0, x_1, x_2)\to (y, x+t,x)$, the Whitney umbrella becomes the following singular Weierstrass model
\begin{equation}
W_0: y^2-x^3-tx^2=0.
\end{equation}
We think of $W_0$ as a fibration over the line parametrized by $t$. 
However, $W_0$  is not an elliptic fibration in the formal sense since its generic fiber is not a smooth elliptic curve but a nodal curve. 
We can fix it by a deformation that will force the generic fiber to be a smooth elliptic curve. 
If we consider a deformation parameter  $s$ together with the parameter $t$ as defining a surface for the base of the fibration, we are naturally led to:
\begin{equation}
W'_0: y^2-x^3-tx^2-s=0.
\end{equation}
The variety W$'_0$ is now an elliptic threefold over the plane parametrized by $(s,t)$. 
We have increased the dimension of the base, but have lost the connection to the Whitney umbrella, however. 

Interestingly,  there is a satisfying solution to this problem if we do a base change that replaces the new deformation variable $s$ by its square $s^2$:  
\begin{equation}
Y_0:\quad y^2-x^3-tx^2-s^2=0.
\end{equation}
 Such a base change is similar to the one famously performed by Atiyah to obtain the threefold known as the conifold singularity in string theory and which gave the first example of a flop \cite{Atiyah}. What we end up with in our case is isomorphic to a singular threefold called the {\em  suspended pinch point} whose normal equation is the binomial variety:
 \begin{equation}
Z_0: \quad x_0 x_1 -x_2 x_3^2=0.
\end{equation}

\begin{figure}[htb]
\begin{center}
			\scalebox{.8}{\begin{tikzpicture}
			\draw[dashed, latex-latex, line width=1 pt] (1.2,0)--(3.2,0);  \draw[dashed, latex-latex, line width=1 pt]  (6.2,0)--(8.2,0);
						\draw [-latex, line width=1 pt] (.2,-1)--+(-45:1.8);
												\draw[->,->=stealth, ,line width=1 pt] (2.5,-4.1)--+(-45:1.7);
												\draw [-latex, line width=1 pt](7,-4.1)--+(225:1.7);
												\draw [-latex, line width=1 pt](9.5,-1.1)--+(225:1.7);
												\draw [-latex, line width=1 pt](4,-1.1)--+(225:1.7);\draw[-latex, line width=1 pt] (6,-1.1)--+(-45:1.7);
												\draw [-latex, line width=1 pt](12,-4.1)--+(200:5);
												\draw [-latex, line width=1 pt](-2,-4.1)--+(-20:5);
												\draw [-latex, line width=1 pt](-.5,-1.1)--+(225:1.7);
												\draw [-latex, line width=1 pt](10.5,-1.1)--+(-45:1.7);

					\node (1) at (0,0)	{\begin{tikzpicture}
				\node[draw,circle,thick,scale=1,fill=black,label=above:{}] (1) at (0,0){};
				\node[draw,circle,thick,scale=1,fill=black,label=above:{}] (2) at (1,0){};
				\node[draw,circle,thick,scale=1,fill=black,label=above:{}] (3) at (2,0){};
				\node[draw,circle,thick,scale=1,fill=black,label=above:{}] (5) at (0,1){};
				\node[draw,circle,thick,scale=1,fill=black,label=above:{}] (4) at (1,1){};
				\draw[thick] (1) to (2) to (3) to (4) to (5)  to (1);
				\draw[thick] (1) to (4) to (2);
			\end{tikzpicture}};
					\node (1) at (5,0)	{\begin{tikzpicture}
				\node[draw,circle,thick,scale=1,fill=black,label=above:{}] (1) at (0,0){};
				\node[draw,circle,thick,scale=1,fill=black,label=above:{}] (2) at (1,0){};
				\node[draw,circle,thick,scale=1,fill=black,label=above:{}] (3) at (2,0){};
				\node[draw,circle,thick,scale=1,fill=black,label=above:{}] (5) at (0,1){};
				\node[draw,circle,thick,scale=1,fill=black,label=above:{}] (4) at (1,1){};
				\draw[thick] (1) to (2) to (3) to (4) to (5)  to (1);
				\draw[thick] (2) to (4);   \draw[thick] (5) to (2);
			\end{tikzpicture}};
			\node (3) at (10,0) 
			{
			\begin{tikzpicture}
				\node[draw,circle,thick,scale=1,fill=black,label=above:{}] (1) at (0,0){};
				\node[draw,circle,thick,scale=1,fill=black,label=above:{}] (2) at (1,0){};
				\node[draw,circle,thick,scale=1,fill=black,label=above:{}] (3) at (2,0){};
				\node[draw,circle,thick,scale=1,fill=black,label=above:{}] (5) at (0,1){};
				\node[draw,circle,thick,scale=1,fill=black,label=above:{}] (4) at (1,1){};
				\draw[thick] (1) to (2) to (3) to (4) to (5)  to (1);
				\draw[thick] (5) to (3);
				\draw[thick] (5) to (2);
			\end{tikzpicture}};

			\node (4) at (2,-3) {
			\begin{tikzpicture}
				\node[draw,circle,thick,scale=1,fill=black,label=above:{}] (1) at (0,0){};
				\node[draw,circle,thick,scale=1,fill=black,label=above:{}] (2) at (1,0){};
				\node[draw,circle,thick,scale=1,fill=black,label=above:{}] (3) at (2,0){};
				\node[draw,circle,thick,scale=1,fill=black,label=above:{}] (5) at (0,1){};
				\node[draw,circle,thick,scale=1,fill=black,label=above:{}] (4) at (1,1){};
				\draw[thick] (1) to (2) to (3) to (4) to (5)  to (1);
				\draw[thick] (2) to (4);
			\end{tikzpicture}};
			
			\node (5) at (8,-3) {
						\begin{tikzpicture}
				\node[draw,circle,thick,scale=1,fill=black,label=above:{}] (1) at (0,0){};
				\node[draw,circle,thick,scale=1,fill=black,label=above:{}] (2) at (1,0){};
				\node[draw,circle,thick,scale=1,fill=black,label=above:{}] (3) at (2,0){};
				\node[draw,circle,thick,scale=1,fill=black,label=above:{}] (5) at (0,1){};
				\node[draw,circle,thick,scale=1,fill=black,label=above:{}] (4) at (1,1){};
				\draw[thick] (1) to (2) to (3) to (4) to (5)  to (1);
				\draw[thick] (2) to (5);
			\end{tikzpicture}};
			
			\node (6) at (5,-6){
			\begin{tikzpicture}
				\node[draw,circle,thick,scale=1,fill=black,label=above:{}] (1) at (0,0){};
				\node[draw,circle,thick,scale=1,fill=black,label=above:{}] (2) at (1,0){};
				\node[draw,circle,thick,scale=1,fill=black,label=above:{}] (3) at (2,0){};
				\node[draw,circle,thick,scale=1,fill=black,label=above:{}] (5) at (0,1){};
				\node[draw,circle,thick,scale=1,fill=black,label=above:{}] (4) at (1,1){};
				\draw[thick] (1) to (2) to (3) to (4) to (5)  to (1);				
			\end{tikzpicture}
			};

			\node (7) at (12,-3) {
						\begin{tikzpicture}
				\node[draw,circle,thick,scale=1,fill=black,label=above:{}] (1) at (0,0){};
				\node[draw,circle,thick,scale=1,fill=black,label=above:{}] (2) at (1,0){};
				\node[draw,circle,thick,scale=1,fill=black,label=above:{}] (3) at (2,0){};
				\node[draw,circle,thick,scale=1,fill=black,label=above:{}] (5) at (0,1){};
				\node[draw,circle,thick,scale=1,fill=black,label=above:{}] (4) at (1,1){};
				\draw[thick] (1) to (2) to (3) to (4) to (5)  to (1);
				\draw[thick] (3) to (5);
			\end{tikzpicture}};

			\node (8) at (-2,-3) {
						\begin{tikzpicture}
				\node[draw,circle,thick,scale=1,fill=black,label=above:{}] (1) at (0,0){};
				\node[draw,circle,thick,scale=1,fill=black,label=above:{}] (2) at (1,0){};
				\node[draw,circle,thick,scale=1,fill=black,label=above:{}] (3) at (2,0){};
				\node[draw,circle,thick,scale=1,fill=black,label=above:{}] (5) at (0,1){};
				\node[draw,circle,thick,scale=1,fill=black,label=above:{}] (4) at (1,1){};
				\draw[thick] (1) to (2) to (3) to (4) to (5)  to (1);
				\draw[thick] (1) to (4);
			\end{tikzpicture}};

\end{tikzpicture}
}
\end{center}
\caption{ {\bf Crepant resolutions and partial resolutions  of the suspended pinch point $Z_0: x_0 x_1-x_2 x_3^2=0$.} 
The suspended pinch point is at the bottom row. The three crepant resolutions are on the  top row.  The four partial resolutions are in the middle row. 
The external varieties of the middle row have the singularities of a cylindrical quadratic cone $k[x,y,z,t]/(z^2 - x y)$ while the others have singularities of a cone over a quadric surface (quadric threefold with a double point)  $k[x,y,z,t]/(xy-zt)$.
}
\end{figure}
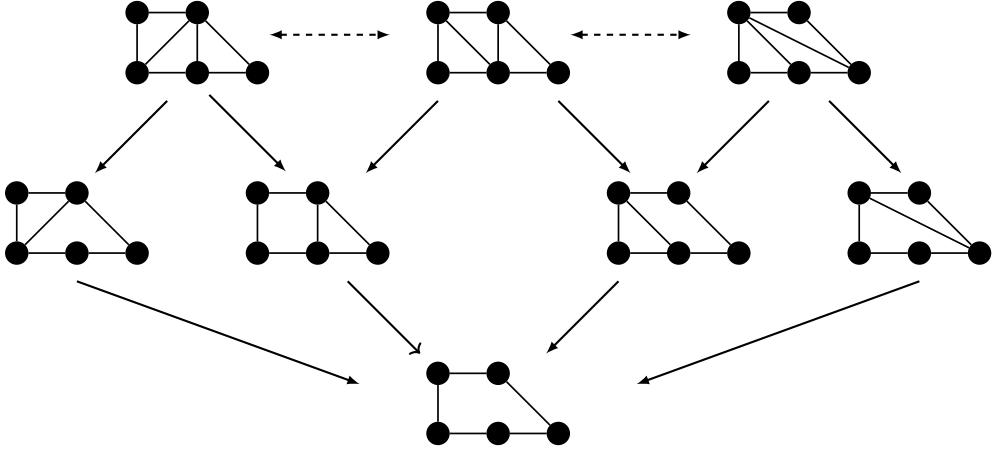

\subsection{Suspension}
\begin{defn}[Suspension]
Given a hypersurface singularity $X$ defined by an equation $f=0$ in the polynomial ring $k[x_1,\ldots, x_n]$, a suspension of $X$ is any hypersurface singularity with equation $f+y_1^2+\cdots +y_k^2$  in the extended polynomial ring $ k[x_1,\ldots, x_n,y_1,\ldots,y_k]$. 
\end{defn}  

The suspension of the Whitney umbrella by one variable is called the {\em suspended pinch point}.
Its defining equation after a trivial linear redefinition of the coordinates  is exactly $Z_0$.
Just as the suspended pinch point is the first suspension of the  Whitney umbrella, the conifold singularity ($x_0 x_1-x_2 x_3$) is the first suspension of an A$_1$ singularity $(xy-t^2$) after a trivial change of variables.
 
 \subsection{Weierstrass form, codimension-one  singular fibers, and Mordell--Weil group}
 
The suspended pinch point is a normal toric variety singular in codimension two along the ideal $x_0=x_1=x_3=0$.
This singularity has three distinct crepant resolutions connected by Atiyah flops.

By performing the shift
$x\rightarrow x-t/3$, the defining equation of the Weierstrass model $Y_0$ takes the form of a Weierstrass model 
\begin{equation}
y^2=x^3+fx+g,
\end{equation} 
with Weierstrass coefficients 
\begin{equation}
f=-\frac{t^2}{3},\quad g=s^2+\frac{2}{27}t^3,
\end{equation}
from which we can read off the discriminant  $\Delta=4 f^3+ 27 g^2$  and $j$-invariant  $j=1728 \cdot 4f^2/ \Delta$ \cite{Deligne.Formulaire}:
\begin{equation}
\Delta= s^2(4t^3+27 s^2), \quad \quad j=-\frac{2^8t^6}{s^2(4t^3+27 s^2)}.
\end{equation}
The reduced discriminant locus is composed of two prime divisors
\begin{equation}\label{Eq:DeltaJ}
S= V(s)~~\mathrm{and}~~~\Delta'=V(4t^3+27 s^2).
\end{equation}
While $S$ is smooth, $\Delta'$ has cuspidal singularities at $V(t,s)$, which is also the  intersection of the two divisors.
The valuation of $f,g,\Delta$ over $S$ is $(0,0,2)$, so  Tate's algorithm tells us that we have a singular fiber of Kodaira type I$_2$ over the generic point of $S$.  

Over a generic point, the valuation of $f,g,\Delta$ with respect to $\Delta'$ is $(0,0,1)$ so it is of Kodaira type I$_1$.

We note that the elliptic fibration has non-trivial rational sections 
\begin{equation}
\Sigma^\pm:\quad  x=y\pm s=0.
\end{equation}
These two sections correspond to opposite non-torsion elements of the Mordell--Weil group. Thus, the Mordell--Weil group is nontrivial with nonzero rank, that is generically one.

\section{Crepant resolutions and flops} \label{Sec:OneBU}

In this section, we study the crepant resolutions of the SPP elliptic fibration. 
As the SPP elliptic fibration can be written algebraically as a suspended pinch point, we anticipate that there are three distinct crepant resolutions forming a 3-chain worth of flops. 
The blowups used to describe the crepant resolutions of the suspended pinch points are usually centered at (non-Cartier) divisors as reviewed in Figure \ref{Fig:SSPRes}. 
However, for an elliptic fibration with a fiber of type I$_2$, we can also resolve the variety with a unique blowup 
centered on the singular locus $V(x,y,s)$ \cite{ESY1}. 
We will study these four resolutions and match them accordingly.

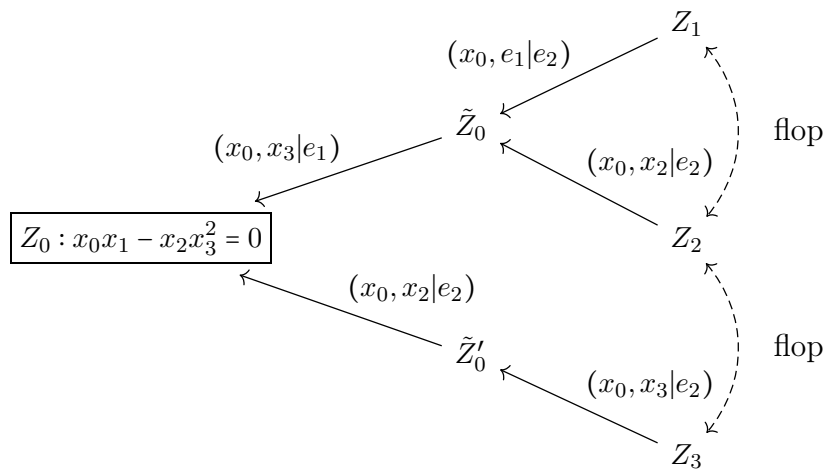
\begin{figure}[ht!]
\begin{center}
\begin{tikzcd}[column sep=60pt]
    &  & Z_1  
         \arrow[bend left=40, leftrightarrow, dashed]{dd}[right] {\quad \text{\large flop}}
    &\\ 
    & {
    \tilde{Z}_0}\arrow[leftarrow]{ru} {\displaystyle ({x}_0,e_1|e_2)}  \arrow[leftarrow]{rd} {\displaystyle ({x}_0,x_2|e_2)}& &\\
  \boxed{{Z}_0:x_0 x_1-x_2 x_3^2=0}\arrow[leftarrow]{ru} {\displaystyle (x_0,x_3|e_1)}   \arrow[leftarrow]{rd} {\displaystyle (x_0,x_2|e_2)}&  & Z_2  \arrow[bend left=40, leftrightarrow, dashed]{dd}[right] {\quad \text{\large flop}}&\\
     &
   {\tilde{Z}'_0}   \arrow[leftarrow]{rd} {\displaystyle ({x}_0,x_3|e_2)}&  \\
   & & Z_3
\end{tikzcd} 
\end{center}

\caption{{\bf The three crepant resolutions of the suspended pinch point $Z_0:x_0 x_1-x_2 x_3^2=0$}. 
In the blowup defining $Z_1$,  $e_1=0$ is the exceptional locus of the previous blowup defining $\tilde{Z}_0$, i.e. $e_1$ is the exceptional locus of the blowup of $Z_0$  centered at $x_0=x_3=0$. 
\label{Fig:SSPRes}}
\end{figure}

\subsection{A one-blowup crepant resolution}\label{Sec:ResZ2bis}
In this section, we resolve the singularities of the SPP elliptic fibration with one blowup centered on the support of the singular scheme. 
  We start from
\begin{equation}
F_0=-y^2+x^3+tx^2+s^2,
\end{equation}
defined over a base $B$, where $S=V(s)$ and $T=V(t)$ are two smooth Cartier divisors of $B$. 
The elliptic fibration is defined in the ambient space $X_0=\mathbb{P}_B[\mathscr{L}^{\otimes 3}\oplus \mathscr{L}^{\otimes 2}\oplus \mathscr{O}_B]$. 
The singular locus is the ideal $(x,y,s)$. 
We perform the blowup centered at the ideal $(x,y,s)$ \cite{ESY1}:
\begin{equation}\label{1}
\begin{tikzpicture}[baseline= (a).base]
\node[scale=.9] (a) at (0,0) {
\begin{tikzcd}[column sep=2.2cm, ampersand replacement=\&]
X_0 \arrow[leftarrow]{r} {\displaystyle \  (x,y,s|e_1)} \& X_1 
\end{tikzcd} 
};
\end{tikzpicture}
\end{equation}
giving $F_0=e_1^2 F_1$ with proper transform $F_1$: 
\begin{equation}
F_1=-y_1^2+e_1 x_1^3+tx_1^2+s_1^2.
\end{equation}
The blowup introduces a $\mathbb{P}^2$-projective bundle over the subvariety $V(x,y,s)$ of $X_0$ and the  projective coordinates  of the $\mathbb{P}^2$-bundle are $[x_1:y_1:s_1]$. Moreover, to follow the change in the fiber structure of the elliptic fibration, we note that after the blowup we have  $s=e_1 s_1$. 
We check by working in patches that this blowup provides a full resolution of all singularities. 
Moreover, since the blowup is a regular sequence of length three and the singularity is a double point, the canonical class does not change. Thus, by definition, the resolution is crepant. 

The fibral divisors are 
\begin{equation}
\begin{cases}
D_0:  s_1=-y_1^2+e_1 x_1^3+tx_1^2=0,\\
D_1:  e_1=-y_1^2+tx_1^2+s_1^2=0.
\end{cases}
\end{equation}
We denote by $C_0$ and $C_1$ the generic fibers of these fibral divisors. 
Over a generic point of $S$, $C_0$ and $C_1$ intersect as 
\begin{equation}
e_1=s=y_1^2-tx_1^2=0,
\end{equation}
which corresponds geometrically (i.e. after a field extension to allow us to consider the square root of $t$) to two  distinct points $y=\pm\sqrt{t} x_1$. 
We thus find the fiber is of type I$_2^{\text{ns}}$ with dual graph the affine Dynkin diagram $\tilde{A}_1$.

We expect the fiber to degenerate over $V(s,t)$, the intersection of $S$ and $\Delta'$.
When $t=0$, we see that $C_0$ is irreducible, while  $C_1$ splits into two lines:
\begin{equation}
(S\cap \Delta')\to 
\begin{cases}
C_0 &\rightarrow C_0,\\
C_1&\rightarrow C_1^++C_1^-,
\end{cases}
\end{equation}
where
\begin{equation}\begin{array}{ll}
C_1^\pm:  e_1=(s_1\pm y_1)=0.
\end{array}
\end{equation}
It follows that the  fiber over the generic point of $S\cap \Delta'$ is composed of three rational curves all meeting at the point $e_1=s_1=y_1=0$ and forming in this way a fiber of type IV$^{\text{s}}$. 
Thus, the collision of $S$ and $\Delta'$ defines the following enhancement 
\begin{equation}
\text{I}_1+\text{I}_2^{\text{ns}}\rightarrow \text{IV}^{\text{s}}.
\end{equation}
The intersection diagram of the fiber components form the nodes of the affine Dynkin diagram $\tilde{A}_2$. 
Figure \ref{fig:SPP.Fiber}  illustrates the  fiber structure of the SPP SU($2$)$\times$U($1$)-model.

		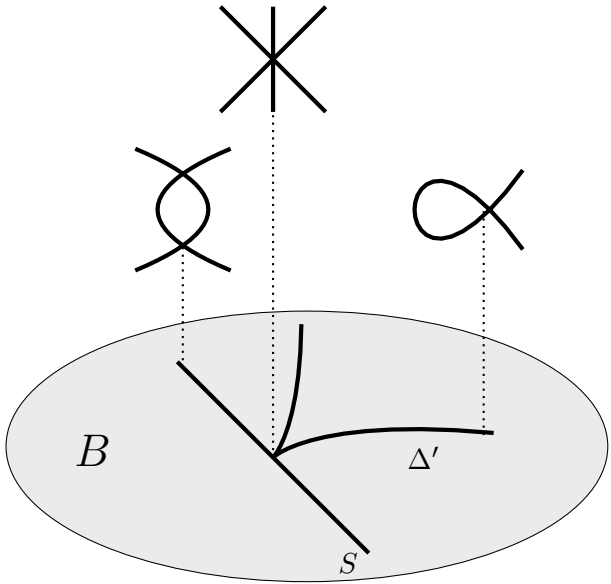
\begin{figure}[thb]
\begin{center}
			
				\begin{tikzpicture}[scale=1.1]
%
					 					
								\node at (0,4.8) {$ 
					  	\begin{tikzpicture}[scale=1]
						\draw[fill=black!8!] (.45,.85) ellipse (4cm and 1.8cm);
						\node at (-2.4,.8) {\LARGE $B$};
						\node at (1,-.7) {$S$};
						\node at (2,.7) {$\Delta'$};
							\draw[yshift=20,scale=.45,domain=-1.6:1.3,variable=\x, ultra thick, rotate=45] plot({2*\x*\x,\x*\x*\x});
							\draw[yshift=20,scale=.45,domain=-4:4,variable=\x, ultra thick,rotate=45] plot({0,-\x}); 
							\draw[dotted,thick] (0,.8) to (0,5.5){};
							
							 \node at (0,6) {
					  	\begin{tikzpicture}[scale=.7]
							\draw[ ultra thick] (-1,-1) to (1,1);
							\draw[ ultra thick] (-1,1) to (1,-1);
							\draw[ ultra thick] (0,-1) to (0,1);
						\end{tikzpicture}
								};

						\draw[dotted,thick] (2.8,1) to (2.8,4){};
							\draw[dotted,thick] (-1.2,3.5)--+ (-90:1.6){};

							\node at (2.6,4) {
							\begin{tikzpicture}[scale=2.5]
									\draw[ultra thick,scale=.4,domain=-1.2:1.2,variable=\x] plot({\x*\x-1,\x*\x*\x-\x-5});
							\end{tikzpicture}
								};
													
								 \node at (-1.2,4) {$ 
					  	\begin{tikzpicture}[scale=1.5]
							\draw[yshift=20,scale=.45,domain=-1.2:1.2,variable=\x, ultra thick] plot({\x*\x-.5,\x});
							\draw[yshift=20,scale=.45,domain=-1.2:1.2,variable=\x, ultra thick] plot({1-\x*\x-.5,\x}); 
						\end{tikzpicture}
								$};

						\end{tikzpicture}
								$};
								
				\end{tikzpicture}
			

			\end{center}
			\caption{
			Fiber structure of the elliptic fibration $Y_0: y^2z- x^3-t x^2z-s^2z^3=0$ after a crepant resolution. Here $S=V(s)$ and 
			$\Delta'=4 t^3 + 27 s^2$. 
			}
			\label{fig:SPP.Fiber}
		\end{figure}


The fibral divisor $D_0$ is a projective bundle $\mathbb P_{S}[ \mathscr{O}_S \oplus  \mathscr L]$ over $S$. 
The  fibral divisor $D_1$ is a conic bundle over $S$ with discriminant supported on $S\cap V(t)$. The  generic fiber of D$_1$ over $S$ is a  smooth conic that  splits ($S\cap\Delta'$)  into two lines meeting transversally at one point. 
The conic bundle can be described in this case as the vanishing locus of a section of $\mathscr{O}(2)\otimes \mathscr{L}^{\otimes 2}$ in 
$\mathbb P_{S}[ \mathscr{O}_S \oplus  \mathscr{ L}\oplus \mathscr{L}]$.

 The geometric weight of the curve C$_1^\pm$ 
 with respect to $(D_0,D_1)$ is $(C_1^\pm\cdot D_0, C_1^\pm \cdot D_1)$:
 \begin{equation}
 \varpi(C_1^\pm)=(1,-1),\quad \varpi(C_0)=(-2,2),
 \end{equation}
  which means that $C_1^\pm$ has weight $\varpi_1$ with respect to $\mathfrak{su}_2$ and we get  the  fundamental representation $\bf{2}$ for SU($2$). 

We will now determine the U($1$) charge of $C_1^\pm$. 
To prepare the stage, we first review how the sections change after the blowup. 
The center of the blowup is $(x,y,s)$, which is exactly the intersection of the  two sections $\Sigma^\pm$. Before the blowup, $\Sigma^\pm: x=y\pm s=0$. The total transform of $\Sigma^\pm$ is then  $e_1 x_1 = e_1(y_1\pm s_1)=0$, which consists of two components: 
the divisor $D_1$ given by $V(e_1)$ and the proper transform is
\begin{equation}
\Sigma_1^\pm:\quad x_1=y_1\pm s_1=0. 
\end{equation}
 The proper transform $\Sigma_1^\pm$ intersects $D_1$ transversally at  $e_1=x_1=y_1\pm s_1=0$, which is a point of $C_1^\pm$. The proper transforms  $\Sigma_1^+$ and $\Sigma_1^-$  do not intersect $C_0$ nor each other.

To compute the U($1$)-charge of $C_1^\pm$, we determine the divisor $\Gamma$ that is the image of $\Sigma^\pm_1$ under the Shioda map, which is essentially an orthogonal projection.  In the present case, one can check that the following divisor does the job:
\begin{equation}
\Gamma=\Sigma_1^+ -\Sigma_0-\pi^*\pi_* [(\Sigma^+_1-\Sigma_0)\cdot  \Sigma_0]+\frac{1}{2}D_1=\Sigma_1^+ -\Sigma_0-\pi^* L+\frac{1}{2}D_1.
\end{equation}
The divisor $\Gamma$  satisfies all the conditions required for the Shioda-Tate-Wazir map \cite{Morrison:2012ei}: vanishing intersection with the generic fiber, vanishing intersection with any fibral divisor,  vanishing intersection with any class pulled back from the base.  
The sections $\Sigma_1^+$ and $\Sigma_0$ do not intersect, the components $C_0$ and $C_1$ project to points in the base. 
We can now compute the U($1$) charge of $C_1^\pm$ as an intersection number 
\begin{equation}
\Gamma \cdot C_1^{\pm}=\pm 1/2.
\end{equation}
We will write the U($1$) charges in terms of units of this charge, we use  $2\Gamma$ to ensure that the  U($1$)-charge is  integral. 
Thus, we get:
\begin{equation}
(D_0, D_1, 2\Gamma) \cdot C_1^\pm= (1,-1, \pm1).
\end{equation}

\subsection{Crepant resolution $Z_1$}

Starting from the expression for the suspended pinch point in~\eqref{eq:xiSPP}:
\begin{equation}
Z_0:x_0 x_1-x_2 x_3^2,
\end{equation}
we would now like to resolve the singularity of the elliptic fibration.  We will follow the blowup sequences depicted in Figure~\ref{Fig:SSPRes}, beginning with the one that leads to $Z_1$ here.

We first perform the blowup centered at the ideal $(x_0,x_3)$:
  \begin{equation}\label{12}
\begin{tikzcd}[column sep=2cm, ampersand replacement=\&]
Z_0 \arrow[leftarrow]{r} {\displaystyle (x_0,x_3|e_1)} \& \tilde{Z}_0 
\end{tikzcd} 
\end{equation}
giving  $Z_0=e_1\tilde{Z}_0$ with
   \begin{equation}\label{eq:z0p}
\tilde{Z}_0=x_0 x_1-e_1 x_2 x_3^2,
\end{equation}
 Here and in what follows we will exert a slight abuse of notation by 
using the same variable name for the $x_i$ both before and after any blowup.
The blowup introduces a $\mathbb{P}^1$-projective bundle over the subvariety $V(x_0,x_3)$ and the  projective coordinates  of the $\mathbb{P}^1$-bundle are $[x_0:x_3]$. 

We then perform a second blowup centered at the ideal $(x_0,e_1)$:
   \begin{equation}\label{11}
\begin{tikzcd}[column sep=2cm, ampersand replacement=\&]
Z_0 \arrow[leftarrow]{r} {\displaystyle (x_0,x_3|e_1)} \& \tilde{Z}_0 \arrow[leftarrow]{r} {\displaystyle (x_0,e_1|e_2)} \& Z_1 .
\end{tikzcd} 
\end{equation}
These two blowups add up to the following birational transformation
\begin{equation}
(x_0,x_1,x_2,x_3)\rightarrow (e_1 e_2 x_0, x_1, x_2, e_1 x_3),
\end{equation}
and the projective coordinates of the full resolution are $[e_2 x_0:x_3][x_0:e_1]$.  
The total transform of $Z_0$ is  $Z_0=e_1e_2^2 Z_1$, where $Z_1$ is its proper transform given by 
\begin{equation}
Z_1=x_0 x_1-e_1 x_2 x_3^2.
\end{equation}
This blowup introduces another $\mathbb{P}^1$-projective bundle over the subvariety $V(x_0,e_1)$.  
One can check that that this blowup provides a full resolution of all singularities. The blowup is trivially crepant as we only blowup smooth divisors. 

We will now understand the geometry of this crepant resolution from the point of view of the elliptic fibration. 
We first recall the  transformed coordinates before the two  blowups:
\begin{equation}\label{eq:transformcoor}
\begin{cases}
x_0= y-s,\  x_1= y+s\  x_2= x+t,\   x_3=x,\\
 x=x_3, \quad
y=\frac{1}{2}(x_0+x_1), \quad s=\frac{1}{2}(x_1-x_0), \quad t =x_2-x_3.
\end{cases}
\end{equation}
We note that the singular locus $V(y,s,t)$ is $V(x_0,x_1,x_3)$, the section $\Sigma^-=V(y-s,x)$ is $V(x_1, x_3)$, the section $\Sigma^+=V(y+s, x)$ is $V(x_0,x_3)$, and the divisor $S=V(s)$ is $V(x_1-x_0)$. 
We also recall that the singular locus is the intersection $\Sigma^+\cap\Sigma^-$:
\begin{equation}
\begin{array}{l}
\Sigma^+=V(y+s,x)=V(x_0,x_3), \quad \Sigma^-=V(y-s,x)=V(x_1,x_3), \\
 S=V(s)=V(x_1-x_0), \quad V(s,t)=V(x_1-x_0, x_2-x_3).
\end{array}\end{equation}
It follows that the center of the first blowup is the section $\Sigma^+$ while the center of the second blowup is the intersection of the exceptional locus of the first blowup  with  the section $\Sigma^-$.

After the two blowups, we have
\begin{equation}
\begin{cases}
x=e_1 e_2 x_3\\
y=\frac{1}{2}(e_1 e_2^2 x_0+x_1)\\
s=\frac{1}{2}( x_1-e_1 e_2^2x_0)\\
t=x_2-e_1 e_2 x_3.
\end{cases}
\end{equation}
So that when $s=0$, $x_1=e_1 e_2^2 x_0$ and $Z_1$ factors into two components
\begin{equation}
Z_1=e_1(e_2^2 x_0^2-x_2 x_3^2).
\end{equation}
Thus, the fibral divisors over $S$ are
\begin{equation}
\begin{cases}
D_0: x_1-e_1 e_2^2x_0=e_2^2 x_0^2-x_2x_3^2=0\\
D_1:  x_1=e_1=0.\\
\end{cases}
\end{equation}
We denote by $C_0$ and $C_1$ the generic fiber of the fibral divisor $D_0$ and $D_1$, respectively. 
Over a generic point of $S$, $C_0$ and $C_1$ intersect as 
\begin{equation}
x_1=e_1=e_2^2 x_0^2-x_2x_3^2=0
\end{equation}
which corresponds geometrically (i.e. after a field extension to allow us to consider the square root of $t$) to two  distinct points $e_2 x_0=\pm\sqrt{x_2} x_3$. 
We thus find the fiber is of type I$_2^{\text{ns}}$ with dual graph the affine Dynkin diagram $\widetilde{\text{A}}_1$.

We expect the fiber to degenerate  further over $V(s,t)$, the intersection of $S$ and $\Delta'$.
When $t=0$, we have
$x_2=e_1 e_2 x_3$ and the second equation defining $C_0$ factors as $e_2(x_0^2-e_1 x_3^3)=0$. Thus, 
we see that $C_1$ is irreducible, while  $C_0$ splits into two lines:
\begin{equation}
(S\cap \Delta')\to 
\begin{cases}
C_0 &\rightarrow C_0'+C_2,\\
C_1&\rightarrow C_1,
\end{cases}
\end{equation}
where
\begin{equation}
C_0':  x_1=x_2-e_1 e_2 x_3=e_2 x_0^2-e_1 x_3^3=0, \quad
C_{2}:  x_1=x_2  =e_2=0.
\end{equation}
It follows that the  fiber over the generic point of $S\cap \Delta'$ is composed of three rational curves all meeting at the point $V(s, t,e_1,e_2)=V(x_1,x_2,e_1,e_2)$ and forming in this way a fiber of type IV$^{\text{s}}$. 
Thus, the collision of $S$ and $\Delta'$ defines the following enhancement 
\begin{equation}
\text{I}_1+\text{I}_2^{\text{ns}}\rightarrow \text{IV}^{\text{s}}.
\end{equation}
The geometric weight of the curve $C_1$ is  $(2,-2)$ while the geometric weight of $C_2$ is $(-1,1)$. 
\begin{equation}
\varpi(C'_0)=\varpi(C_2)=(-1,1), \quad 
\varpi(C_1)=(2,-2)
.
\end{equation}
We notice that $\varpi(C_2)$ does not touch the zero section and has  exactly the opposite weight we got for the resolution of section \ref{Sec:ResZ2bis}. We can conclude that the two are connected by a flop.

\subsection{Crepant resolution $Z_2$}
To perform the second blowup in Figure~\ref{Fig:SSPRes}, we follow the above steps up to equation~(\ref{eq:z0p}) after which we perform an alternate blowup, centered at the ideal $(x_0,x_2)$:
 \begin{equation}\label{112}
\begin{tikzpicture}[baseline= (a).base]
\node[scale=.9] (a) at (0,0) {
\begin{tikzcd}[column sep=2cm, ampersand replacement=\&]
Z_0 \arrow[leftarrow]{r} {\displaystyle (x_0,x_3|e_1)} \& \tilde{Z}_0 \arrow[leftarrow]{r} {\displaystyle (x_0,x_2|e_2)} \& Z_2 
\end{tikzcd} 
};
\end{tikzpicture}
\end{equation}
 giving   $Z_0=e_1 e_2 Z_2$ with
\begin{equation}
Z_2=x_0x_1-e_1 x_2 x_3^2.
\end{equation}
This blowup introduces another $\mathbb{P}^1$-projective bundle over the subvariety $V(x_0,x_2)$ of $X$.  The projective coordinates of the full resolution are $[e_2 x_0:x_3][x_0:x_2]$.  
One can check that this blowup provides a full resolution of all singularities and is crepant.

In terms of the transformed coordinates
\begin{equation}\label{eq:transformcoor}
x_0\rightarrow y-s,~~x_1\rightarrow y+s,~~x_2\rightarrow x+t,~~x_3\rightarrow x,
\end{equation}
 we find that after these blowups, under which
\begin{equation}
(x_0,x_1,x_2,x_3)\rightarrow (e_1 e_2 x_0, x_1, e_2 x_2, e_1 x_3),
\end{equation}
we have
\begin{equation}\left\{\begin{array}{l}\label{eq:co1}
y=\frac{1}{2}(e_1 e_2 x_0+x_1)\\s=\frac{1}{2}( x_1-e_1 e_2 x_0)\\x=e_1 x_3\\t=e_2 x_2-e_1 x_3.
\end{array}\right.\end{equation}
So that when $s=0$
\begin{equation}
Z_2=e_1 (e_2 x_0^2-x_2 x_3^2).
\end{equation}
The fibral divisors are 
\begin{equation}
\begin{cases}
D_0:  x_1-e_1 e_2 x_0=e_2 x_0^2-x_2 x_3^2=0\\
D_1:  x_1=e_1=0 .\\
\end{cases}
\end{equation}
We denote by $C_0$ and $C_1$ the generic fibers of these fibral divisors. 
Over a generic point of $S$, $C_0$ and $C_1$ intersect as 
\begin{equation}
x_1=e_1=e_2 x_0^2-x_2 x_3^2=0\\
\end{equation}
which corresponds geometrically  to two  distinct points. 
We thus find the fiber is of type I$_2^{\text{ns}}$ with dual graph the affine Dynkin diagram $\tilde{A}_1$.

We expect the fiber to degenerate over $V(s,t)$, the intersection of $S$ and $\Delta'$,
\begin{equation}
C_1 (\text{at $t=0$}):  x_1=e_1=e_2 x_2=0
\end{equation}
and the projective coordinates $[x_0:x_2][e_2 x_0:x_3]$. The first $\mathbb{P}^1$ only appears over the exceptional $e_2=0$ which would imply $t=0$.  So, away from $t=0$ we have the $\mathbb{P}^1$ parameterized by $[e_2 x_0:x_3]$.  When $t=e_2 x_2=0$ we have a union of two lines $[x_0:x_2][0:1]$ and $[1:0][e_2:x_3]$.   Meanwhile
\begin{equation}
C_0  (\text{at $t=0$}):   x_1-e_1 e_2 x_0=e_2 x_0^2-x_2 x_3^2=e_2 x_2-e_1 x_3=0.
\end{equation}
When $t=0$, we thus see that $C_0$ is irreducible, while  $C_1$ splits into two lines:
\begin{equation}
(S\cap \Delta')\to 
\begin{cases}
C_0 &\rightarrow C_0,\\
C_1&\rightarrow C_1'+C_{12},
\end{cases}
\end{equation}
where
\begin{equation}
C_{12}:  s=e_1=e_2=0,\quad 
C_1':  s=e_1=x_2=0.
\end{equation}
All three curves ($C_0$, $C_1'$, $C_{12}$) meet at a single point $s=e_1=e_2=x_2=0$. Thus, they form a fiber of type IV$^{\text{s}}$ and  
 the collision of $S$ and $\Delta'$ defines the following enhancement 
\begin{equation}
\text{I}_1+\text{I}_2^{\text{ns}}\rightarrow \text{IV}^{\text{s}}.
\end{equation}
The geometric weights of the curves  $C'_1$ and $C_{12}$ with respect to the fibral divisors $D_0$ and $D_1$ are 
\begin{equation}
\varpi(C'_1)=\varpi(C_{12})=(1,-1), \quad \varpi(C_0)=(-2,2).
\end{equation}

\subsection{Crepant resolution $Z_3$}
Performing the third sequence of blowups in Figure~\ref{Fig:SSPRes}, we begin with a blowup centered at the ideal $(x_0,x_2)$:
 \begin{equation}\label{113}
\begin{tikzpicture}[baseline= (a).base]
\node[scale=.9] (a) at (0,0) {
\begin{tikzcd}[column sep=1.4cm, ampersand replacement=\&]
Z_0 \arrow[leftarrow]{r} {\displaystyle (x_0,x_2|e_1)} \& \tilde{Z}'_0 
\end{tikzcd} 
};
\end{tikzpicture}
\end{equation}
giving $Z_0=e_1 \tilde{Z}_0'$ with
\begin{equation}
\tilde{Z}_0'=x_0 x_1-x_2 x_3^2.
\end{equation}
The blowup introduces a $\mathbb{P}^1$-projective bundle over the subvariety $V(x_0,x_2)$ and the  projective coordinates  of the $\mathbb{P}^1$-bundle are $[x_0:x_2]$.  

We then perform a second blowup centered at the ideal $(x_0,x_3)$:
 \begin{equation}\label{114}
\begin{tikzpicture}[baseline= (a).base]
\node[scale=.9] (a) at (0,0) {
\begin{tikzcd}[column sep=1.8cm, ampersand replacement=\&]
Z_0 \arrow[leftarrow]{r} {\displaystyle (x_0,x_2|e_1)} \& \tilde{Z}'_0 \arrow[leftarrow]{r} {\displaystyle (x_0,x_3|e_2)} \& Z_2 
\end{tikzcd} 
};
\end{tikzpicture}
\end{equation}
giving $Z_0=e_1 e_2 Z_3$ with
\begin{equation}
Z_3= x_0 x_1-e_2 x_2 x_3^2.
\end{equation}
This blowup introduces another $\mathbb{P}^1$-projective bundle over the subvariety $V(x_0,x_3)$ of $X$ .  The projective coordinates of the full resolution are  $[e_2 x_0:x_2][x_0:x_3]$.  
One can check that that this blowup provides a full resolution of all singularities and is crepant.

In terms of the transformed coordinates
\begin{equation}\label{eq:transformcoor}
x_0\rightarrow y-s,~~x_1\rightarrow y+s,~~x_2\rightarrow x+t,~~x_3\rightarrow x,
\end{equation}
 we find that after these blowups, under which
\begin{equation}
(x_0,x_1,x_2,x_3)\rightarrow (e_1 e_2 x_0, x_1, e_1 x_2, e_2 x_3),
\end{equation}
we have
\begin{equation}\left\{\begin{array}{l}\label{eq:co3}
x= e_2 x_3\\
y=\frac{1}{2}(e_1 e_2 x_0+x_1)\\
s=\frac{1}{2}( x_1-e_1 e_2 x_0)\\
t=e_1 x_2-e_2 x_3.
\end{array}\right.\end{equation}
When $s=0$,  $Z_3$ factors into two components: 
\begin{equation}
Z_3=e_2(e_1 x_0^2-x_2 x_3^2).
\end{equation}
Thus, the fibral divisors are 
\begin{equation}
\begin{cases}
D_0:  x_1-e_1 e_2 x_0=e_1 x_0^2-x_2 x_3^2=0\\
D_1:  x_1=e_2=0.\\
\end{cases}
\end{equation}
We denote by $C_0$ and $C_1$ the generic fibers of these fibral divisors. 
Over the generic point of $S$,  $C_0$ and $C_1$ are rational curves. But as we shall see $C_0$ degenerates after a specialization.

Over a generic point of $S$, $C_0$ and $C_1$ intersect as 
\begin{equation}
x_1=e_2=e_1 x_0^2-x_2 x_3^2=0
\end{equation}
which corresponds geometrically  to two  distinct points. We thus find the fiber is of type I$_2^{\text{ns}}$ with dual graph the affine Dynkin diagram $\tilde{A}_1$.

We expect the fiber to degenerate over $V(s,t)$, the intersection of $S$ and $\Delta'$.  When $t=0$, we see that $C_1:~ x_1=e_1=e_2=0$ is irreducible, while  $C_0$  is given by 
\begin{equation}\label{eq:Z3C0T}
C_0: ~x_1-e_1 e_2 x_0 =e_1 x_2 -e_2 x_3=e_1 x_0^2-x_2 x_3^2=0.
\end{equation}
We now consider a particular linear combination of the second and third equation:
\begin{equation}
-x_0^2(e_1 x_2 -e_2 x_3)+x_2(e_1 x_0^2-x_2 x_3^2)=
x_3(e_2 x_0^2-x_2^2 x_3)=0.
\end{equation} 
This shows that over $S\cap \Delta'$, the curve  $C_0$ becomes  reducible. 
We note that $x_3=0$ implies\footnote{When $x_3=0$, the third part of equation 
\eqref{eq:Z3C0T} implies that $e_1x_0^2=0$.  Since $(x_0,x_3)$ cannot vanish at the same time, $x_3 e_1 x_0^2=0$ 
implies that  $x_3=e_1=0$. Finally,  $x_3=e_1=x_1-e_1 e_2 x_0=0$  forces $x_3=e_1=x_1=0$.
} that $x_3=e_1=x_1=0$,  which is a rational curve.  
Hence, we get 
\begin{equation}
(S\cap \Delta')\to 
\begin{cases}
C_0 &\rightarrow C'_0+C_2,\\
C_1&\rightarrow C_1,
\end{cases}
\end{equation}
with 
 \begin{equation}\begin{cases}
C'_{0}:  ~x_1-e_1 e_2 x_0 = e_1 x_2-e_2 x_3 = x_2^2 x_3 -x_0^2 e_2=e_1 x_0^2-x_2 x_3^2=0\\
C_{2}:  ~x_3=x_1=e_1=0\\
C_{1}:  ~ e_2=x_1=e_1=0.
\end{cases}
\end{equation}
The curve  $C'_0$ is the normalization of a cuspidal curve. In the patch $x_3\neq 0$:
$$
x_1=\frac{e_1^3 x_0^3}{x_3^3},\quad e_2= \frac{e_1^2 x_0^2}{x_3^3}, \quad x_2=e_1 \frac{ x_0^2}{x_3^2}.
$$ 
All three curves are therefore smooth rational curves and meet at the same point $x_1=x_3=e_1=e_2=0$. 
Thus, they form a  fiber of type IV$^{\text{s}}$ and the collision of $S$ and $\Delta'$ defines the following enhancement 
\begin{equation}
\text{I}_1+\text{I}_2^{\text{ns}}\rightarrow \text{IV}^{\text{s}}_{2}.
\end{equation}
The geometric weights of the curves  $C'_0$ and $C_2$ with respect to the fibral divisors $D_0$ and $D_1$ are 
\begin{equation}
\varpi(C'_0)=\varpi(C_2)=(-1,1), \quad \varpi(C_1)=(2,-2).
\end{equation}
We note that the weights obtained here are  of opposite signs compare to those of $Z_2$ but  coincide with those of  $Z_1$.

\section{Euler characteristic and Hodge numbers}\label{Sec:Char}

In this section, we compute the Euler characteristic of an SU($2$)$\times$U($1$)-model following \cite{Euler}.
We also compute the Hodge numbers in the case of a  Calabi--Yau  threefold.

\begin{thm}[{Euler characteristic of an SU($2$)$\times$U($1$)-model}]\label{Thm:Euler}
Let $\mathscr{L}$ be a line bundle over a projective variety $B$. 
Consider the projective bundle $X_0=\mathbb{P}_B[\mathscr{O}_B\oplus\mathscr{L}^{\otimes 2}\oplus\mathscr{L}^{\otimes 3}]$. 
Let $Y_0$ be the elliptic fibration of an {\textup{SU($2$)$\times$U($1$)}}-model cut by the equation 
$y^2z-(x^3+ t x^2 z +  s^2 z^3)=0$ in $X_0$. Let $Y$ be a crepant resolution of $Y_0$.  Then the Euler characteristic of $Y$ is 
\begin{align}
\chi(Y)& =\frac{12 L}{1+3L} c(TB),\\
 &= 
 12 L+  12 (c_1 L - 3L^2) + 12 L (c_2 - 3 c_1 L + 9 L^2)+\cdots 
\end{align}
where $c(TB)$ is the total Chern class of the base $B$ and $L=c_1(\mathscr{L})$, $c_1$ is written for $c_1(TB)$, and the Euler characteristic is given by the terms of degree $\dim B$. 
 \end{thm}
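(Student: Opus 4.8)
The plan is to compute the Euler characteristic via the standard pushforward technique for crepant resolutions of elliptic fibrations, following the method of \cite{Euler}. The key observation is that the Euler characteristic is a topological invariant, so it suffices to compute it for any one of the crepant resolutions $Y$ constructed in Section \ref{Sec:OneBU}. The most convenient choice is the one-blowup resolution of Section \ref{Sec:ResZ2bis}, since it is obtained from the ambient projective bundle $X_0=\mathbb{P}_B[\mathscr{O}_B\oplus\mathscr{L}^{\otimes 2}\oplus\mathscr{L}^{\otimes 3}]$ by a single blowup centered at the smooth complete intersection $V(x,y,s)$.

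\emph{First} I would set up the Euler characteristic of the singular model $Y_0$ as a hypersurface in $X_0$. Since $\chi$ of a space equals the degree-$\dim$ part of its total homological Chern class pushed forward, and $Y_0$ is cut by a section of a line bundle on $X_0$, I would use the adjunction/normal-bundle formula to write $c(TY_0)$ in terms of $c(TX_0)$ and the class of $Y_0$, then apply the pushforward $\pi_*: A_*(X_0)\to A_*(B)$ for the projective bundle $\pi: X_0\to B$. The total Chern class of $X_0$ is given by the relative Euler sequence, so $c(TX_0)=\pi^*c(TB)\cdot(1+H)(1+H+2L)(1+H+3L)$ where $H$ is the relative hyperplane class and $L=c_1(\mathscr{L})$, with the relation $H(H+2L)(H+3L)=0$ governing the pushforward $\pi_*$. \emph{Second}, I would incorporate the effect of the crepant blowup $f: Y\to Y_0$: because the center $V(x,y,s)$ is a smooth codimension-three complete intersection and the blowup is crepant (as established in Section \ref{Sec:ResZ2bis}), the change in $\chi$ under blowup is governed by the standard formula $\chi(Y)=\chi(Y_0)+(\text{rank of normal bundle}-1)\,\chi(\text{center})$, or more precisely I would track the Chern class correction directly at the level of the Chow group.

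\emph{The cleanest route}, and the one I expect the paper to follow, is to bypass the separate singular computation and instead compute $\chi(Y)$ for the resolved model directly as a hypersurface (the proper transform $F_1$) inside the blown-up ambient space $X_1$. Here one writes $c(TX_1)$ via the blowup formula for Chern classes, expresses the class of the resolved hypersurface $Y$ in $A_*(X_1)$, applies adjunction once more, and then pushes forward all the way down to $B$ through the two projections $X_1\to X_0\to B$. The projective-bundle pushforward relations (the Segre classes of $\mathscr{O}_B\oplus\mathscr{L}^{\otimes 2}\oplus\mathscr{L}^{\otimes 3}$) collapse everything into a rational expression in $L$ and $c(TB)$, and the generating-function form $\chi(Y)=\frac{12L}{1+3L}\,c(TB)$ should emerge after simplification, with the displayed expansion obtained by expanding $(1+3L)^{-1}=1-3L+9L^2-\cdots$ and multiplying term by term against $c(TB)=1+c_1+c_2+\cdots$.

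\emph{The main obstacle} will be the bookkeeping in the pushforward through the blowup: correctly expressing the exceptional divisor class $e_1$ and the proper-transform class in terms of pullbacks and $e_1$, imposing the two relations (the ambient projective-bundle relation and the relation $e_1^3=\cdots$ satisfied by the exceptional class over its center), and verifying that all the non-polynomial contributions cancel to leave the clean closed form. I would manage this by computing Segre classes of the center's normal bundle explicitly and using the identity $\pi_*(H^{k})=s_{k-2}$ for the rank-three bundle, checking the final answer against the known low-dimensional cases (e.g. the surface base giving $g(S)=1+6K^2$ already recorded in the introduction, and comparison with the SO($n$)-model Euler characteristics of \cite{SO,Euler}) to catch sign or coefficient errors.
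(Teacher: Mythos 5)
Your proposal is mathematically sound and would arrive at the right answer, but it is considerably more labor-intensive than the argument the paper actually uses. You correctly identify the two key preliminary points — that $\chi$ is invariant under crepant birational maps (so any one resolution suffices) and that the one-blowup resolution of Section \ref{Sec:ResZ2bis} is the convenient choice — but you then propose to carry out the full Chern-class pushforward through $X_1\to X_0\to B$ from scratch. The paper instead observes that the blowup center $V(x,y,s)$ and the resolution procedure are \emph{identical} to those of a generic SU($2$)-model, for which the closed-form answer
$\chi(Y)=6\,\frac{3LS+2L-S^2}{(S+1)(1+6L-2S)}\,c(TB)$
was already established in \cite{Euler}; the only new input is that here the divisor $S$ supporting the I$_2$ fiber is forced to be a section of $\mathscr{L}^{\otimes 3}$, so one substitutes $[S]=3L$ and the rational function collapses to $\frac{12L}{1+3L}$. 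Your route buys self-containedness and would independently verify the cited formula (the bookkeeping you describe — Segre classes of the center's normal bundle, the relation on the exceptional class, the projective-bundle pushforward — is exactly the machinery behind \cite{Euler}), but at the cost of redoing a computation that the specialization argument renders unnecessary. If you do pursue the direct computation, the sanity checks you list at the end are good ones; I would add checking that your general answer, before specializing, reproduces the displayed SU($2$) formula as a function of $S$ and $L$.
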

\begin{proof}
Since the Euler characteristic is a k-invariant, it is preserved between two smooth varieties connected by a crepant birational map \cite{Batyrev.Betti}. 
 Thus, to compute the 
 Euler characteristic of an SU($2$)$\times$U($1$)-model, it is enough to compute it for the crepant resolution of our choice. 
 We will use the crepant resolution of Section \ref{Sec:OneBU}. 
The center of the blowup is exactly the same as for an SU($2$)-model with the specialization $[S]=3L$. 
The Euler characteristic of an SU($2$)-model is \cite[\S 5]{Euler}:
$$
\chi(Y)=6 \frac{3 L S+2 L-S^2}{(S+1) (1+6 L-2 S)} c(TB).
$$
We get the Euler characteristic of an SU($2$)$\times$U($1$)-model  by substituting $S\to 3L$ in the previous formula. 
\end{proof}
\begin{rem}
We note that this Euler characteristic matches that of the crepant resolution of a generic Weierstrass-model with Mordell--Weil group $\mathbb{Z}/3\mathbb{Z}$ \cite[\S 7.3]{Salazar} corresponding to the PSU($3$)-model \cite[\S2.5]{Char2} and of an elliptic fibration of type E$_6$ (see \cite[Theorem 4.3]{AE2} or \cite{Char2}). 
\end{rem}

In the Calabi--Yau case, we have $c_1=L$, which gives the generating function 
\begin{equation}
\chi(Y)= 12 c_1- 24 c_1^2+ 12(6 c_1^3 + c_1 c_2) +\cdots
\end{equation}

Using motivitic integration, Kontsevich proved in \cite{Kontsevich} that  birational equivalent Calabi--Yau varieties have the same 
Hodge numbers.  We now assume that the variety is a Calabi--Yau threefold. 
We use the following definition of a Calabi--Yau variety: 
\begin{defn}
A Calabi-Yau variety is a smooth compact projective variety  $Y$ of dimension $n$
with a trivial canonical class and such that $H^i(Y, \mathscr{O}_Y)=0$  for $1 \leq  i \leq n-1$.
\end{defn}
In the case of a threefold, such a Calabi--Yau has non-trivial Hodge numbers $h^{1,1}(Y)$ and $h^{2,1}(Y)$, and its Euler characteristic satisfies the relation 
  $\chi(Y)=2 h^{1,1}(Y)-2 h^{2,1}(Y)$.  
  Following \cite[\S4]{Euler}, we have the following Corrollary. 
\begin{cor}[Hodge numbers for a Calabi--Yau threefold SPP-model]  
Under the assumptions  of Theorem \ref{Thm:Euler}, if $Y$ is a Calabi--Yau threefold, 
its Euler characteristic is $\chi(Y)=-24 K^2$, where $K$ is the canonical class of the base of the elliptic fibration.  
Moreover, its Hodge numbers $h^{1,1}(Y)$ and $h^{2,1}(Y)$ are 
$$
\chi(Y)=-24K^2, \quad h^{1,1}(Y)=13-K^2, \quad h^{2,1}(Y)=13+11 K^2.
$$
\end{cor}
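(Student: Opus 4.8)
The plan is to derive the Corollary directly from Theorem \ref{Thm:Euler} by specializing to the Calabi--Yau threefold case, where $\dim B = 2$ and $c_1(\mathscr{L}) = c_1(TB) = -K$. The strategy splits naturally into two parts: first extract the Euler characteristic as a number, then convert this into the two Hodge numbers using the standard threefold relations together with a separate computation of one of $h^{1,1}$ or $h^{2,1}$.

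First I would specialize the generating function from Theorem \ref{Thm:Euler}. In the Calabi--Yau case we have $L = c_1 = -K$, so I take the degree-two piece of the expansion
\begin{equation}
\chi(Y)= 12 c_1- 24 c_1^2+ 12(6 c_1^3 + c_1 c_2) +\cdots,
\end{equation}
namely the term $-24 c_1^2$. Since the Euler characteristic is the component of top degree $\dim B = 2$, this yields $\chi(Y) = -24 c_1^2 = -24 K^2$, using $c_1^2 = (-K)^2 = K^2$. This establishes the first displayed equality.

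Next I would pin down $h^{1,1}(Y)$ by a geometric count rather than by Hodge theory directly. The Shioda--Tate--Wazir description organizes $h^{1,1}(Y)$ as a sum of contributions: the rank coming from the base $B$ (namely $h^{1,1}(B)$), the rank-one contributions from the zero section and from the nonzero Mordell--Weil generator, and the number of fibral divisors beyond the zero component (here one such divisor $D_1$, since the fiber is of type $\text{I}_2$ with Lie algebra $\mathfrak{su}_2$ of rank one). For a rational base with $h^{1,1}(B)$ expressible via $K^2$ through Noether's formula --- for a smooth rational surface $h^{1,1}(B) = 10 - K^2$ --- these contributions sum to $h^{1,1}(Y) = (10 - K^2) + 1 + 1 + 1 = 13 - K^2$. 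This matches the claimed value, following the method of \cite[\S4]{Euler}.

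Finally I would recover $h^{2,1}(Y)$ algebraically. With $\chi(Y) = 2h^{1,1}(Y) - 2h^{2,1}(Y)$ in hand, solving for the remaining Hodge number gives
\begin{equation}
h^{2,1}(Y) = h^{1,1}(Y) - \tfrac{1}{2}\chi(Y) = (13 - K^2) + 12 K^2 = 13 + 11 K^2,
\end{equation}
which is the last claimed identity. The main obstacle is not the arithmetic but justifying the count of $h^{1,1}(Y)$: one must verify that the Shioda--Tate--Wazir map accounts for \emph{all} of $H^{1,1}$ (equivalently that there are no additional algebraic cycles from reducible fibers in codimension two contributing new independent classes), and that the nonzero Mordell--Weil rank contributes exactly one to the Picard number as expected. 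Once the structure of the fibral divisors and sections established in Section \ref{Sec:OneBU} is invoked, this reduces to the same bookkeeping carried out in \cite{Euler}, which is why appealing to \cite[\S4]{Euler} is the cleanest route.
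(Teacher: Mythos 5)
Your proposal is correct and follows essentially the same route as the paper: specialize the generating function of Theorem \ref{Thm:Euler} with $L=c_1=-K$ to get $\chi(Y)=-24K^2$, compute $h^{1,1}(Y)=1+h^{1,1}(B)+r+f=13-K^2$ via the Shioda--Tate--Wazir theorem with Noether's formula for the rational base and $r=f=1$, and then solve $\chi=2h^{1,1}-2h^{2,1}$ for $h^{2,1}$. Your closing caveat about verifying that the Shioda--Tate--Wazir classes exhaust $H^{1,1}$ is exactly the content of the cited theorem \cite[Corollary 3.2]{Wazir}, which the paper also invokes.
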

\begin{proof}
We compute $h^{1,1}(Y)$  using the Shioda--Tate--Wazir theorem \cite[Corollary 3.2]{Wazir}: 
$$
h^{1,1}(Y)=1+h^{1,1}(B)+r+f,
$$
where $r$ is the Mordell--Weil rank and $f$ is the number of geometrically irreducible fibral divisors not touching the zero section of the elliptic fibration. 
The Calabi--Yau condition forces $B$ to be a rational surface. Denoting the canonical class of $B$ by $K$, Noether's formula gives 
$$
h^{1,1}(B)= 10-K^2.
$$
In the present case, $r=f=1$. We can compute h$^{1,1}(Y)$ and then use the Euler characteristic $\chi(Y)=2 h^{1,1}(Y)-2 h^{2,1}(Y)$ to compute $h^{2,1}(Y)$. 
For a Calabi-Yau, the vanishing of the first Chern class implies that $L=-K$. 
\end{proof}

\section{Weak coupling limit}\label{Sec:Weak}

\subsection{Brane geometry at weak coupling}
When taking a weak coupling limit, the discriminant locus can split into different components that are wrapped by orientifolds and  branes. These branes can be singular and can split further into brane-image-brane pairs in the double cover of the base. 
The typical situation is the following. Upon a weak coupling limit, the discriminant and the  $j$-invariant are at leading order in the deformation parameter $\epsilon$: 
\begin{align}
\Delta &= \epsilon^2 h^{2+n} \prod_i  (\eta^2_i-h \psi_i^2)\prod_j (\eta^2_j -h \chi_j), \prod_k \phi_k+O(\epsilon^3),\\
 j &\propto \frac{h^{4-n}}{\epsilon^2  \prod_i  (\eta^2_i-h \psi_i^2)\prod_j (\eta^2_j -h \chi_j) \prod_k \phi_k}.
\end{align}
The locus $h=0$ is the orientifold locus as seen from the base of the elliptic fibration and is a section of  the line bundle $\mathscr{L}^{\otimes 2}$. 
As $\epsilon$ goes to zero, $j$ goes to infinity and the string couplings goes to zero:
\begin{align}
\lim_{\epsilon\rightarrow 0} j=\infty\Longrightarrow Im(\tau)=\infty\iff g_s=0.
\end{align}
The  orientifold theory is defined by considering the double cover  $\rho: X\to B$ of the base  $B$ branched at $h=0$. Explicitly, we have 
\begin{equation}
X:   \quad \xi^2=h.
\end{equation}
The involution map $\sigma:X\rightarrow X$, which sends $\xi$ to $-\xi$, can be used to define a $\mathbb{Z}/2\mathbb{Z}$ orientifold symmetry $ \Omega (-)^{F_L}\sigma$ and the branched locus $\xi=0$ is  interpreted as an $O7$ orientifold. 
The geometry of Sen's weak coupling limit can be summarized by the following table:

\begin{table}[htb]
\begin{center}
\begin{tabular}{|c|l|l|}
\hline
Name & In the discriminant & In the double cover $X$\\
\hline
Orientifold & $h^2$ & $\xi=0$ \\
\hline
 Whitney brane &  $\eta^2_i-h \chi_i $ & $\eta^2_i-\xi^2 \chi_i=0$ \\
\hline
 Brane-image-brane pair & $\eta^2_j-h \psi^2_j$ & $(\eta_j+ \xi \psi_j)(\eta_j- \xi \psi_j)=0$\\
\hline
 Invariant  brane &  $\phi_k$ & $\phi_k=0$\\
\hline
\end{tabular}
\end{center}
\caption{Familiar types of branes found in Sen's weak coupling limit. The Whitney brane is the one observed in Sen's limit of an $E_8$ elliptic fibration. It can specialize into a brane-image-brane pair when $\chi$ is a perfect square and into two invariant branes on top of each other when $\chi=0$.    \label{table.brane}}
\end{table}

\subsection{Weak coupling limit and the tadpole matching condition}
 Following the point of view of  \cite{AE2}, the weak coupling limit of an elliptic fibration is a degeneration such that in the limit, the $j$-invariant becomes infinity almost everywhere. An infinite $j$-invariant corresponds to a vanishing  imaginary part of $\tau$,  which in  type IIB string theory means that the  string coupling goes to zero: $g_s\rightarrow 0$. 
 In a weak coupling limit of F-theory, the discriminant locus can decompose into multiple prime components corresponding to orientifolds and D7-branes.

In compactifications of F-theory on on an elliptically-fibered fourfold $Y\rightarrow B$, the number of  D3 branes   ($N_{D3}$) is a linear function of the Euler characteristic $ \chi(Y)$ of the elliptic fibration $Y$ and the $G_4$-flux: 
\begin{equation}
\text{D3 charge in F-theory}:\quad N_{D3}=\tfrac{1}{24}\chi(Y)-\frac{1}{2}\int_Y G_4\wedge G_4.     \label{FtheoryD3Tadpole}
\end{equation}
In a compactification of type  IIB string theory on a $\mathbb{Z}_2$ orientifold  $\rho: X\to B$ with D7-branes wrapping cycles $D_i$,  the D3 charge depends on D7-brane fluxes and the Euler characteristics of the cycles $D_i$:
\begin{equation}\label{IIBD3Tadpole}
\text{D3 charge in type IIB}:\quad 2 N_{D3}=\tfrac{1}{6}\chi(O)+\tfrac{1}{24}\sum_i\chi(D_i)+\frac{1}{2} \sum_i \int_{D_i}\mathbf{tr}(F^2_i),
\end{equation}
where $O$ is an orientifold,  $D_i$ are the surfaces wrapped by D7-branes, and $\int_{D_i} \mathbf{tr}(F^2_i)$ are fluxes localized on the D7-branes. The trace $\mathbf{tr}$ is taken in the adjoint representation. 

Duality between M-theory and type IIB suggests a matching between the  D3 charges observed in type IIB and in M-theory \ \cite{CDE}:
\begin{equation}\label{FIIB}
2\chi(Y)-24\int_Y G_4\wedge G_4=4\chi(O)+\sum_i\chi(D_i)+12 \sum_i \int_{D_i} \mathbf{tr}(F_i^2).
\end{equation} 
For vanishing $G$-fluxes and type IIB fluxes, the matching gives a  purely topological relation between Euler characteristics \cite{CDE,AE1,AE2}: 
\begin{align}\label{FIIB.Matching}
 &\text{Tadpole  matching condition}:\quad  2\chi(Y)=4\chi(O)+\sum_i\chi(D_i).
 \end{align}
In general, the  curvature contribution to the D3 tadpole in F-theory and type IIB theory do not have to match as a configuration of branes in type IIB can  recombine into a different configuration of branes with a different curvature contribution to the D3 charge but with the difference compensated by  fluxes \cite{CDE}. 

In Sen's limit, there is only one divisor $D$ which is singular and the meaning of its Euler characteristic was explained in \cite{CDE,AE1}. 
The tadpole matching condition is shown to work in a few known cases: Sen's original limit \cite{CDE,AE1}, the E$_7$, E$_6$, and D$_5$ elliptic fibrations \cite{AE1}, and
the cases of elliptic fibrations of rank one in the model introduced in \cite{EKY1}. 
In all these cases, the tadpole matching condition can be understood as a by-product of a much more general relation true at the level of the Chow group \cite{AE1,AE2}: 
\begin{align}\label{FIIB.Matching}
\text{Tadpole  matching condition}:\quad  2\varphi_* c(Y)=\rho_* \Big( {4c(O)+\sum_ic(D_i)}\Big),
 \end{align}
where $\rho: Y\to B$ is the elliptic fibration and $\rho: X\to B$ is the double cover of the base branched along  a smooth divisor that is a section of the line bundle $\mathscr{L}^{\otimes 2}$. The  relation holds for $B$ of arbitrary dimension and without imposing the Calabi--Yau condition. Mathematically, the origin of this type of relation can be understood using Verdier specialization as explained in \cite[Remark 4.5]{AE2} and in \cite{Clingher:2012rg}.

\subsection{Weak coupling limit for the SPP elliptic fibration}

In this section, we investigate the weak coupling limit of a SPP elliptic fibration and the tadpole matching condition of \cite{CDE,AE1,AE2}. 
We show that the weak coupling limit consists of an orientifold, two bi-branes wrapping the orientifold locus, and a stack of two smooth transverse invariant branes wrapping the pullback of the divisor $S=V(s)$. 

The weak coupling limit is by definition a one-parameter degeneration of the elliptic fibration such that the $j$-invariant becomes infinite almost everywhere. 
As explained in \cite{AE2}, a weak coupling limit is a semi-stable degeneration of the elliptic fibration. 
In the case of an SPP elliptic fibration, we have a fiber of type I$_2$ over the divisor $S=V(s)$. 
We recall the equation of the discriminant locus and the $j$-invariant:
\begin{equation}
\Delta= s^2(4t^3+27 s^2), \quad \quad j=-\frac{2^8t^6}{s^2(4t^3+27 s^2)}.
\end{equation}
It is clear that a weak coupling limit can be defined by 
\begin{equation}
\begin{cases}
s\to  \epsilon s, \\
t\to  h
\end{cases}
\end{equation}
which gives the following behavior at leading order in $\epsilon$: 
\begin{equation}
\Delta\to  4 \epsilon^2 s^2 h^3+O(\epsilon^4), \quad \quad j\to -\frac{2^6h^{3}}{\epsilon^2 s^2}.
\end{equation}
Following \cite{AE2}, we interpret this configuration in an orientifold theory. 
The orientifold is based on a double cover of the base branched at $h=0$. 
We start by considering the total space of the line bundle $\mathscr{L}^{\otimes 2}$ over $B$ and we introduce $\xi$ as a section of $\mathscr{L}^{\otimes 2}$. 
We then define the double cover $\rho: X\to B$  of the base $B$ branched at $V(h)$: 
\begin{equation}
X= V(\xi^2-h). 
\end{equation}

\begin{rem}[Physical interpretation]
In a type IIB theory, the orientifold involution is $\Omega (-1)^{F_L}\sigma$,
where $\sigma: X\to X: (\xi, x)\to (-\xi,x)$ is the spacetime involution, $\Omega$ is the involution on the worlsheet, and $F_L$ is the left-moving fermion number. 
The brane structure that we get describes a supersymmetric system composed of an orientifold, a USp($4$)-stack of branes, and two bi-branes on top of the orientifold. 
What is interesting here is that in contrast to the usual configuration obtained in a Sen's limit, all the branes are wrapping smooth divisors as in certain limits found in \cite{AE2}.
\end{rem}

If we pullback the discriminant to $X$, we get 
\begin{equation}
\rho^* \Delta = \xi^6  (\rho^* s)^2. 
\end{equation}
The orientifold is worth $\xi^4$ and the leftover $\xi^2$ corresponds to two bi-branes\footnote{A bi-brane is a brane-image-brane in a $\mathbb{Z}_2$ orientifold.} wrapping the orientifold.  The pullback $\rho^* s^2$ corresponds to two invariant branes forming a USp($4$)-stack of branes.

\subsection{Tadpole cancellation condition}

Comparing the D3 tadpole from type IIB and from M-theory in the absence of fluxes will impose the following tadpole matching condition: 
\begin{equation}
2\chi(Y)= 4 \chi(O)+\sum_i \chi(D_i),
\end{equation}
where $D_i$ are obtained by pulling back the divisors seen in the weak coupling limit of the discriminant locus. 

As discussed in \cite{AE2}, a more general relation is usually true: 
\begin{equation}
2\varphi_* c(Y)= 4 \rho_* c(O)+\sum_i \rho_* c(D_i).
\end{equation}
In the case of the SU($2$)$\times$U($1$)-model studied in this paper, we have 
\begin{equation}
2\chi(Y)=4 \chi(O) + \chi(O)+\chi(O)+\chi(\overline{S})+\chi(\overline{S}).
\end{equation}

\begin{thm}
Let $\varphi: Y\to B$ be an elliptic fibration defined by the crepant resolution of a SPP-model. 
Let $\rho: X\to B$ be the double cover of the base branched on   a divisor $h=0$ of class $\mathscr{L}^{\otimes 2}$. 
Then 
\begin{equation}
 \varphi_* c(Y)= 3 \rho_*  c({O})+ \rho_* c(\overline{S}).
\end{equation}
\end{thm}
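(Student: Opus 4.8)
The plan is to compute each of the three pushforwards appearing in the identity as an explicit rational expression in $L=c_1(\mathscr{L})$ multiplying the total Chern class $c(TB)$, and then to check that the resulting identity of rational functions holds. The left-hand side is essentially already available: the computation behind Theorem~\ref{Thm:Euler} produces the full pushforward as a class in the Chow ring of $B$, namely
\begin{equation}
\varphi_* c(Y)=\frac{12L}{1+3L}\,c(TB),
\end{equation}
the Euler characteristic being only its component of degree $\dim B$. So the work is entirely on the right-hand side, i.e.\ in computing $\rho_* c(O)$ and $\rho_* c(\overline{S})$ for the double cover $\rho:X\to B$.

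The engine of the whole computation is a single observation about the double cover. Realize $X=V(\xi^2-h)$ as a hypersurface inside the total space $p:\mathbb{L}=\mathrm{Tot}(\mathscr{L})\to B$, where $\xi$ is the tautological section of $p^*\mathscr{L}$ and $h$ is pulled back from $B$. Since the relative tangent bundle of a line-bundle total space is $T_{\mathbb{L}/B}=p^*\mathscr{L}$, one has $c(T\mathbb{L})=p^*\big(c(TB)(1+L)\big)$, while $[X]=2p^*L$ because $\xi^2-h$ is a section of $p^*\mathscr{L}^{\otimes 2}$. Adjunction then gives
\begin{equation}
c(TX)=\frac{c(T\mathbb{L})|_X}{1+[X]|_X}=\rho^*\left(\frac{1+L}{1+2L}\,c(TB)\right),
\end{equation}
so $c(TX)$ \emph{is a pullback} from $B$. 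Consequently the projection formula $\rho_*\rho^*=2$ (as $\rho$ is finite of degree two) computes any such pushforward with no further geometry. The same principle handles the two branes. For the orientifold, $O=V(\xi)\cap X$ maps isomorphically onto the branch divisor $V(h)\subset B$ (the reduced ramification divisor, with class $[O]=\rho^*L$ and $\rho^*[V(h)]=2[O]$); adjunction together with $c(TV(h))=c(TB)|_{V(h)}/(1+2L)$ yields
\begin{equation}
\rho_* c(O)=\frac{2L}{1+2L}\,c(TB).
\end{equation}
For the invariant-brane stack, $\overline{S}=\rho^*S$ is a smooth divisor in $X$ of class $3\rho^*L$; applying adjunction inside $X$ and using that $c(TX)$ is a pullback shows $c(T\overline{S})$ is again a pullback from $B$, whence the projection formula gives
\begin{equation}
\rho_* c(\overline{S})=\frac{6L(1+L)}{(1+2L)(1+3L)}\,c(TB).
\end{equation}

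It then remains to verify the purely algebraic identity
\begin{equation}
3\cdot\frac{2L}{1+2L}+\frac{6L(1+L)}{(1+2L)(1+3L)}=\frac{12L}{1+3L},
\end{equation}
which follows by placing the left-hand side over the common denominator $(1+2L)(1+3L)$ and using $(1+3L)+(1+L)=2(1+2L)$ to cancel the factor $1+2L$. Multiplying through by $c(TB)$ reproduces the displayed pushforwards and proves the theorem. The main obstacle is conceptual rather than computational: recognizing that each relevant total Chern class is a pullback from $B$, which is what reduces every pushforward to a single application of $\rho_*\rho^*=2$ and, crucially, avoids any direct analysis of the branched double cover $\overline{S}\to S$. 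The only genuine hypotheses to track are smoothness and transversality---$V(h)$ smooth, and $S$ meeting $V(h)$ transversally so that $X$, $O$, and $\overline{S}$ are all smooth---together with the factor-of-two bookkeeping in the ramification relation $\rho^*[V(h)]=2[O]$, which is precisely where the orientifold's coefficient $3=(4+2)/2$ in the statement originates.
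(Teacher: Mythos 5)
Your proposal is correct and follows essentially the same route as the paper: compute $c(TX)$ by adjunction in the total space of the line bundle, obtain $\rho_* c(O)=\frac{2L}{1+2L}c(TB)$ and $\rho_* c(\overline{S})=\frac{6L(1+L)}{(1+2L)(1+3L)}c(TB)$ via adjunction and the degree-two projection formula, and conclude with the same rational-function identity against $\varphi_*c(Y)=\frac{12L}{1+3L}c(TB)$. The only cosmetic difference is that you evaluate $\rho_*c(O)$ by identifying $O$ with the branch divisor $V(h)\subset B$, whereas the paper works entirely inside $X$; both give the same class.
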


\begin{proof}
The total Chern class of $X$ is
$$
c(X)= \frac{1+L}{1+2L} \rho^* c(B).
$$
The orientifold  in $X$ is ${O}=V(\xi)$, which gives by adjunction
$$
c({O})=\frac{1}{1+L} c(X)=\frac{1}{(1+2L)} \rho^* c(B).
$$
The divisor $\overline{S}=\rho^* S$ has 
\begin{equation}
c(\overline{S})=  \frac{3L}{1+3L}   c(X)=\frac{3L(1+L)}{(1+2L)(1+3L)}\rho^* c(B).
\end{equation}
Since $\rho$ is a finite morphism of degree two: 
\begin{equation}
\rho_* \rho^* c(B)= \deg(\rho) c(B)= 2 c(B).
\end{equation}
Thus 
\begin{equation}
\rho_*({O})=\frac{2L}{1+2L}c(B), \quad \rho_* c(\overline{S})= \frac{6L(1+L)}{(1+3L)(1+2L)} c(B). \end{equation}
The final result follows from the rational identity 
\begin{equation}
\frac{L}{2 L+1}+\frac{(L+1) L}{(2 L+1) (3 L+1)}=\frac{2L}{1+3L}.
\end{equation}

\end{proof}
\section*{Acknowledgements}
The authors are grateful to  Ravi Jagadeesan, Patrick Jefferson, Monica Kang, Julian Salazar, and Shu-Heng Shao for discussions. 
M.E. is supported in part by the National Science Foundation (NSF) grant DMS-1406925  and DMS-1701635 ``Elliptic Fibrations and String Theory.'' S.P. has been supported by the National Science Foundation through a Graduate Research Fellowship under grant DGE-1144152 and by the Hertz Foundation through a Harold and Ruth Newman Fellowship.

\end{document}